\newtheorem{remark}{Remark}
\let\set\mathbb
\newcommand{\n}{z}
\newcommand{\p}{\zeta}
\newcommand{\T}{\phi}
\newcommand{\IT}{\phi^{-1}}
\newcommand{\N}{{\mathbb N}}
\newcommand{\MC}{{\mathbb C}}
\newcommand{\F}{{\mathcal C}}
\newcommand {\m }{ {\,{\bf mod}\,}}
\newcommand {\zm}{{\rm O}}
\newcommand {\id}{{\rm I}}
\newcommand {\ra}{{\rm rank}}
\newcommand {\ord}{{\rm ord}}
\newcommand {\rank}{{\rm rank}}
\newcommand {\lc}{{\rm lc}}
\newcommand {\den}{{\rm den}}
\newcommand {\num}{{\rm num}}
\newcommand{\SF}{{\mathcal O} }
\newcommand{\IA}{{A^*} }
\newlength\labwd
\newcounter{enumv}[enumiv]
\newcounter{enumvi}[enumv]
\newenvironment{myenum}
    {\advance\@enumdepth\@ne
    \ifnum \@enumdepth >6\@toodeep\else
    \edef\@enumctr{enum\romannumeral\the\@enumdepth}

    \renewcommand\theenumii{\arabic{enumii}}
    \renewcommand\theenumiii{\arabic{enumiii}}
    \renewcommand\theenumiv{\arabic{enumiv}}
    \renewcommand\p@enumii{\theenumi.}
    \renewcommand\p@enumiii{\theenumi.\theenumii.}
    \renewcommand\p@enumiv{\theenumi.\theenumii.\theenumiii.}
    \renewcommand\p@enumv{\theenumi.\theenumii.\theenumiii.\theenumiv.}
    \renewcommand\p@enumvi{\theenumi.\theenumii.\theenumiii.\theenumiv.\theenumv.}
    \begin{list}{\csname label\@enumctr\endcsname}{%
        \usecounter\@enumctr
        \setlength\labelwidth{5pt}
        \setlength\labelsep{4pt}
        \setlength\leftmargin{15pt}
        \setlength\labwd{\ifcase\@enumdepth\or -15pt\or -30pt\or -45pt\or -60pt\or -75pt\or -90pt\fi}%
    }%
    \fi%
    }
    {\ifnum \@enumdepth >6\else\end{list}\fi}
\newcounter{algorithm}
\begin{document}

\title{Desingularization of First Order Linear Difference Systems with Rational
  Function Coefficients} \author{Moulay A. Barkatou}
\affiliation{XLIM UMR 7252 , DMI, University of Limoges; CNRS\\
  123, Avenue Albert Thomas, 87060 Limoges, France}
\email{moulay.barkatou@unilim.fr} \author{Maximilian Jaroschek} \affiliation{
  Technische Universität Wien,\\
  Institut for Logic and Computation\\
  Favoritenstraße 9-11, 1040 Vienna, Austria } \affiliation{
  Johannes Kepler Universität Linz, Institute for Algebra\\
  Altenberger Straße 69, 4040 Linz, Austria } \email{maximilian@mjaroschek.com}
\authornote{The author is supported by the ERC Starting Grant 2014 SYMCAR 639270
  and the Austrian research projects FWF Y464-N18 and FWF RiSE
  S11409-N23.}

\fancyhead{}

\renewcommand{\shortauthors}{M.A. Barkatou, M. Jaroschek}

\begin{abstract}
  It is well known that for a first order system of linear difference equations
  with rational function coefficients, a solution that is holomorphic in some
  left half plane can be analytically continued to a meromorphic solution in the
  whole complex plane. The poles stem from the singularities of the rational
  function coefficients of the system. Just as for differential equations, not
  all of these singularities necessarily lead to poles in solutions, as they
  might be what is called removable. In our work, we show how to detect and
  remove these singularities and further study the connection between poles of
  solutions and removable singularities. We describe two algorithms to
  (partially) desingularize a given difference system and present a
  characterization of removable singularities in terms of shifts of the original
  system.
\end{abstract}

\keywords{
  systems of linear difference equations, apparent singularities,
  desingularization, removable singularities.}

\maketitle
\section{Introduction}
\label{sec:intro}
First order linear difference systems are a class of pseudo-linear
systems~\cite{bronstein,Bar06,jacobson} of the form $\T(Y)=AY$, where $\T$ is the
forward- or backward shift operator and $A$ an invertible matrix with, in our case, rational
function coefficients. To study properties of possible solutions~$Y,$ it is not
always necessary to explicitly compute the solution space, but one can rather
obtain the information from the system itself. Properties that can be derived in
this fashion comprise, among others, the asymptotic behavior~\cite{BaCh1, BaCh2,flajolet},
positive/negative (semi-) definiteness~\cite{abh,zhang}, holomorphicity, and
closure properties of (a class of) solutions~\cite{tetrahedron}.

In the center of attention when analyzing difference and differential systems
lie the poles of the rational function coefficients.  
It is well known that, like in the case of differential equations and systems,
not all poles of the coefficients of a difference system lead to 
singularities for solutions. These apparent singularities can therefore distort the properties of
solutions and should be circumvented in the analysis. One technique to do so is
{\em desingularization}---transforming a given system (or operator) in a way that
removes as many poles of the system as
possible to discard apparent singularities. In this paper we describe the first
algorithm to desingularize first order linear difference systems with rational
function coefficients. Our main tool in the treatment of these systems are
polynomial basis transformations. We show how to achieve desingularization by
composing several basic and easy to compute transformations, and our procedure
results in the provably ``smallest possible'' such desingularizing
transformation in the sense that any other desingularizing transformation can be
obtained as a right multiple.

The main contributions of this paper are:
\begin{enumerate}
\item The first algorithm to desingularize---partially, or, if possible,
  completely---first order linear difference systems with rational function
  coefficients.
\item A non-trivial necessary and sufficient condition for a given system to be desingularizable at a given singularity. 
\item With the help of (2), an analysis of the connection between removable and
  apparent singularities of difference systems and their meromorphic function
  solutions.
\item An algorithm for reducing the rank of the leading matrix at a
  singularity of a linear difference system.
\end{enumerate}

\indent In the context of single linear difference equations~\cite{abh,ah},
linear differential equations~\cite{tsai} and, more general, Ore
operators~\cite{chen:curve,kauers,jaroschek}, desingularization and the effects
of removable singularities have been extensively studied in recent
years. In~\cite{zhang}, the author presents an extension of the idea of
desingularization that also takes into account the leading number coefficients
of Ore operators. For first order differential equations, a first algorithm for
desingularization was given in~\cite{bm}. 

It is possible to convert any first order linear difference system to a
difference operator of higher order and vice versa~\cite{birkhoff, BaCh1, BaCh2,
  barkatou}. Desingularization of systems could therefore be done by computing
for a given system the corresponding operator, use existing techniques to
desingularize the operator and then constructing the desingularized system from
the new operator. While this is possible, the procedure comes with at least two
caveats:
\begin{enumerate}
\item It can be observed that the coefficients grow very large in the conversion
  process which has severe negative impact on the computation time.
\item Desingularization on operator level is done by finding a suitable
  left-multiple of the given operator. In general, this leads to an increase in
  order, and thus to an increase in the dimension of the solution
  space.
\end{enumerate}
Both problems are avoided when dealing directly with systems instead of
operators, making the results presented in this paper an essential tool for
analyzing difference systems.

The paper is organized as follows. In Section~\ref{sec:prelim} we remind the
reader of the formal definition of linear difference systems with rational
function coefficients, well known results about meromorphic function solutions
and the notion of apparent singularities. In Section~\ref{sec:desing}, we
present an algorithm to remove poles of difference systems and give a necessary
and sufficient condition for a singularity to be removable. 
Lastly, the connection between removable poles and apparent singularities is
established in Section~\ref{sec:appsing} before concluding the paper in
Section~\ref{sec:conclusion}.


\section{Difference Systems and Removable Singularities}
\label{sec:prelim}

Let $\F$ be a subfield of the field $\MC$ of complex numbers,
$\F(z)$ the field of rational
functions over $\F$ and $\T$ the $\F$-automorphism of $\F(z)$ defined by
$\T (z)=z+1$.  A homogeneous system of first-order linear difference equations
with rational function coefficients is a system of the form
\begin{equation}
  \label{eq:system}
  \T (Y) = A Y,
\end{equation}
where $Y$ is an unknown $d$-dimensional column vector, $\T(Y)$ is defined
component-wise, and $A$ is an element of $\operatorname{GL}_d(\F(\n))$, the
group of invertible matrices of size $d\times d$ with entries in $\F(\n)$. We
denote the set of matrices of size $d\times d$ with entries in $\F[\n]$ as
$\operatorname{Mat}_d(\F[\n])$.  A (block) diagonal matrix with
entries (respectively blocks) $a_1,\dots,a_d$ is denoted by
$\operatorname{diag}(a_1,\dots,a_d)$. We will refer to system~\eqref{eq:system}
as~$[A]_\T$.

Given a matrix $T\in\operatorname{GL}_d(\F(\n))$, we can apply a basis
transformation
\[Y=TX,\]
and substitute $TX$ into system~\eqref{eq:system} to arrive at an equivalent
system
\[\T (X)=T[A]_\T X,\]
where $T[A]_\T$ is defined as
\[T[A]_\T:=\T (T^{-1})AT.\]
A difference system $[A]_\T$ can be rewritten as
\begin{equation}
  \label{eq:invsystem}
  \IT (Y) =  \IA Y,
\end{equation}
where $\IA:=\IT (A^{-1})$. We will refer to system~\eqref{eq:invsystem} as
$[\IA]_{\IT}$. A transformation $Y=TX$ yields the equivalent
system 
\[\IT(X)=T[A^*]_{\IT}X,\]
with 
\[T[A^*]_{\IT}:=\T^{-1}(T^{-1})\IA
T.\]
The set of meromorphic solutions of $[A]_\T$ form a vector space of dimension
$d$ over the field of $1$-periodic meromorphic functions. It is well
known~\cite{praagman} that any difference system $[A]_\T$ possesses a
fundamental matrix of meromorphic solutions.  If $F$ is a holomorphic solution
of~\eqref{eq:system} in some left half plane ($\operatorname{Re} \n < \lambda$
for some $\lambda\in\set R$),
then it can be analytically continued to a meromorphic solution in the whole
complex plane $\MC$ using the relations:
\begin{alignat*}2
  F(\n) &= \IT(A) \T^{-2}(A)\cdots \T^{-n}(A)\T^{-n}(F)(\n)\\
  ~&=A(\n - 1)A(\n - 2)\cdots A(\n - n)F (\n - n),
\end{alignat*}
which are valid everywhere except at the points of the form $\p + n$ where $\p$
is a pole of $A$ and $n$ is a positive integer ($n = 1, 2, \dots$). If $F$
is a holomorphic solution of~\eqref{eq:system} in some right half plane
($\operatorname{Re} \n > \lambda$), then it can be analytically continued to a
meromorphic solution in the whole complex plane $\MC$ using the relations:
\begin{alignat*}2
  F(\n) &= \T(A^*)\T^2(A^*)\cdots \T^n(A^*) \T^nF(\n)\\
  ~&=A^{*}(\n+1)A^{*}(\n + 2)\cdots A^{*}(\n + n)F(\n + n),
\end{alignat*}
which are valid everywhere except at the points of the form $\p - n$ where $\p$
is a pole of $A^{*}$ and $n$ is a positive integer ($n = 1, 2,\dots$).

We will denote by ${\mathcal P}_r(A)$ (respectively ${\mathcal P}_l(A)$) the set of
poles of $A$ (respectively $A^{*}$).  The elements of ${\mathcal P}_r(A)$
(respectively ${\mathcal P}_l(A)$) will be called the r- (respectively l-)
singularities of the system~\eqref{eq:system}. 
A
point $\p \in \MC$ is said to be {\em congruent} to a given r- (respectively l-)
singularity $\p_0$ of $[A]_\T$ if $\p=\p_0+k$ (respectively $\p=\p_0-k$)
for some positive integer $k$.

The finite singularities of the solutions of $[A]_\T$ are among the points that
are congruent to the singularities of the system.

\begin{definition}\label{def:appsing}
  Let $\p$ be a pole of $A$ (respectively pole of $A^{*}$). It is called
  \begin{enumerate}
  \item a removable r- (respectively l-) singularity if any solution of $[A]_\T$
    which is holomorphic in some left (respectively right) half-plane can be analytically
    continued to a meromorphic solution which is holomorphic at $\p+1$
    (respectively $\p-1$). 
  \item an apparent r- (respectively l-) singularity if any solution of $[A]_\T$ which is
    holomorphic in some left (respectively right) half-plane can be analytically
    continued to a meromorphic solution which is holomorphic at each point of
    $\p +\N^*$ (respectively~$\p-\N^*$). 
  \end{enumerate}
\end{definition}

\begin{example}
  \label{ex:mainex}
  A $2\times 2$ system of linear difference equations is given by
  \[Y(\n+1)=AY=\begin{pmatrix}0 & 1 \\
        \frac{-2(\n +1)}{\n-2} & \frac{3(\n -
          1)}{\n-2}\end{pmatrix}Y(\n),\quad A^*=\begin{pmatrix}
        \frac{3(z-2)}{2z} & \frac{3-z}{2z}\\ 1 & 0
      \end{pmatrix}.\]
  Here ${\mathcal P}_r(A)=\{2\}$ and the points that are congruent to $\p=2$ are
  $3, 4, 5, \dots$. We have ${\mathcal P}_l(A)= \{0\}$ and the corresponding
  congruent points are $-1,-2,-3,\dots$. It can be easily verified that a
  fundamental matrix of solutions of this system is given by
  \[F(\n)=\begin{pmatrix}2^\n & \n^3 + 5\n + 6\\
      2^{\n+1}\quad & \n^3 + 3\n^2 + 8\n + 12\end{pmatrix}.\]
\end{example}

We focus on studying r-singularities. L-singularities can be
removed in the same way by considering $A^*$ and $\phi^{-1}$ instead of $A$
and~$\phi$.

We give an algebraic characterization of removable singularities. 
Let $q \in \F[z]$ be an irreducible polynomial. For $f \in \F(z) \setminus \{0\}$,
we define $\ord_q(f)$ to be the integer $n$ such that $f=q^n\frac{a}{b}$, with
$a, b \in \F[z] \setminus \{0\}$, $q \nmid a$ and $q \nmid b$. We put
$\ord_q(0) = +\infty$.  Let $ \SF_q = \{ f \in \F(z) : \ord_q(f) \geq 0 \}$ be
the {\it local ring} at $q$ and $\SF_q/q\SF_q$ the residue field of
$\F(z)$ at $q$. Let $\pi_q$ denote the canonical homomorphism from $\F[z]$ onto
$\F[z]/\langle q\rangle$.  It can be extended to a ring-homomorphism from
$\SF_q$ onto $\F[z]/\langle q\rangle$ as follows: let $f \in \SF_q$; by
definition of $\SF_q$, $f$ can be written $f=a/b$ where $a,b\in \F[z]$ and
$q \nmid b$. We can find $u, v \in \F[z]$ such that $ub+vq = 1$, {\it the value}
of $f$ at $q$, denoted by $\pi_q(f)$, is then defined as $\pi_q(ua)$.  Sometimes
we write $ f \m q$ for $\pi_q(f)$.  It is clear that $\pi_q$ is well-defined on
$\SF_q$ and is a surjective ring-homomorphism. The kernel of $\pi_q$ is
$q\SF_q$, so $\SF_q/q\SF_q$ and $\F[z]/\langle q\rangle$ are isomorphic.
  
If $A=(a_{i,j})$ is a finite-dimensional matrix with entries in $\F(z)$, we
define the {\em order} at $q$ of $A$ by
$ \ord_q (A) := \min_{i,j}{(\ord_q (a_{i,j}))}.$ We say that $A$ has a pole at
$q$ if $\ord_q (A)<~0$.  We define the {\it leading matrix} of $A$ at $q$
(notation $\lc_q(A)$) as the leading coefficient $A_{0,q}$ in the $q$-adic
expansion of $A$:
\[A = q^{\ord_q(A)}( A_{0,q} + qA_{1,q} +  q^2A_{2,q} + \dots ).\]
Here the coefficients $A_{i,q}$ are matrices with entries in the field
$\F[z]/\langle q\rangle$.  Note that the matrix $A_{0,q}$ is the value of the
matrix $q^{-\ord_q(A)}A$ at $q$.

For a rational function $r=p/q$ with $p$ monic and $\gcd(p,q)=1$, we write
$\num(r):=p$ and $\den(r):=q$. Similarly, for a matrix
$M \in \operatorname{Mat}_d(\F(\n))$ we denote by $\den(M)$ the common
denominator of all the entries of $M$ and denote by $\num(M)$ the polynomial
matrix $ \num(M) := \den(M) M$.

\begin{definition}
  Let $A\in\operatorname{GL}_d(\F(z))$ and $q\in \F[z]$ be an irreducible
  polynomial. We say that $q$ is a $\T$-minimal pole of $A$ if $q \mid \den(A)$
  and for all $j\in \N^*$, $\T^j(q) \nmid \,\den(A)$.
\end{definition}

We can now give an algebraic definition of desingularizability of difference
systems in an inductive fashion.
\begin{definition}
  \label{def:desing}
  Let $A\in\operatorname{GL}_d(\F(z))$ and let $q\in \F[z]$ be an irreducible  
  pole of~$A$. 
  \begin{enumerate}
  \item If $q$ is $\T$-minimal, we say that the system $[A]_\T$ is partially desingularizable at $q$ if there exists a polynomial transformation
  $T\in\operatorname{GL}_d(\F(z))\cap\operatorname{Mat}_d(\F[z])$ such that
  $\ord_q(T[A]_\T) > \ord_q(A)$ and $\ord_p(T[A]_\T) \geq \ord_p(A)$ for any
  other irreducible polynomial $p \in \F[z]$.  If moreover,  $\ord_q(T[A]_\T) \geq 0$ then we say that $[A]_\T$ is
  desingularizable at $q$ and we call~$T$~a
  desingularizing transformation for $[A]_\T$ at~$q$.

\item If $q$ is not $\T$-minimal, then we call $[A]_\T$ (partially)
  desingularizable at $q$ if there exists a desingularizing transformation $T$
  for all poles of $A$ of the form $\T^k(q)$, $k\geq 1$, and $T[A]_\T$ is either (partially) desingularized at $q$ or
  (partially) desingularizable at
  $q$. 
\end{enumerate}
\end{definition}

While it is immediate that, for a $\T$-minimal pole $q$, the algebraic notion of
desingularization implies that the roots of $q$ are removable in the sense of
Definition~\ref{def:appsing}, the converse is not obvious and is proven later in
Section~\ref{sec:appsing}. Consequently, the roots of $q$ are apparent
singularities if (and only if) the system $A$ is desingularizable at all poles
of $A$ of the form $\T^{-k}(q)$, $k\geq 0$. In practice, in order to desingularize
a system at a non-$\T$-minimal pole $q$, one first removes the $\phi$-minimal
pole congruent to $q$. The resulting system then has a new $\phi$-minimal pole
'closer' to $q$. One can repeat this process until $q$ itself is $\phi$-minimal
and eventually removed. A desingularizing transformation for $q$ is then given by the product of all
the transformations obtained during this process.

Let us illustrate in the next example
why we require removing all singularities left of a given pole, thus making it
$\phi$-minimal, before considering it eligible for desingularization.

\begin{example}
  The system $[A]_\T$ given by \[A=\operatorname{diag}\biggl(\frac{(z+1)^2}{z},\frac{1}{z+1}\biggr),\]
can be transformed via $T=\operatorname{diag}(z,1)$ to
$T[A]_\T=\operatorname{diag}(z+1,\frac{1}{z+1})$. The transformed systems still does
not enable analytic continuation at  
$0$ of solutions that are holomorphic in the
left half-plane with $\operatorname{Re}(z)<0$.
\end{example}


\section{Removing r-Singularities}
\label{sec:desing}
\pdfstringdefDisableCommands{%
    \renewcommand*{\bm}[1]{#1}%
}
\subsection{$\bm{\T}$-Minimal Desingularization}

We begin our discussion of removing r-singularities by deriving a method for
shifting a factor in the denominator of a given system in a way that
allows, if possible, cancellation with zeroes of the system. For this we bring the leading
matrix of $A$ into a specific form.

\begin{lemma}
  \label{lem:smithnf}
  Let $A$ be a $d\times d$ matrix with entries in $\F(\n)$ and let
  $q\in \F[z]$ be an irreducible pole of $A$. Set $n:= -\ord_q(A)$ and $r:=\ra(\lc_q(A))$, the
  rank of the leading matrix of $A$ at $q$. There exists a unimodular polynomial
  transformation $S$ such that $S[A]_\T$ is of the form
  \begin{equation}
    \label{eq:smith}
    \begin{pmatrix}
      \frac{1}{q^n}A_1 & \frac{1}{q^{n-1}}A_2,
    \end{pmatrix},
  \end{equation}
  where $A_1, A_2$ are matrices with entries in $\SF_q$ of size $d\times r$ and
  $d\times d-r$ respectively with $\ra(A_1) =r$.
\end{lemma}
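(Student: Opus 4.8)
The plan is to reduce the claim to a column reduction of the leading matrix over the residue field $K := \F[z]/\langle q\rangle$, which is a field because $q$ is irreducible, and then to lift that reduction to a unimodular polynomial transformation. Write $L := \lc_q(A)$, a matrix with entries in $K$; by hypothesis $\ra(L) = r$, and $L \neq 0$, so $r \geq 1$. Recall that a transformation by $S$ gives $S[A]_\T = \T(S^{-1}) A S$ and that the $q$-adic expansion of $A$ reads $A = q^{-n}(L + q A_{1,q} + \cdots)$.

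First I would dispose of the left factor. If $S$ is unimodular and polynomial, then $S^{-1}$ is polynomial, and since $\T$ is an automorphism fixing the constant $\det S \in \F^*$, the matrix $\T(S^{-1}) = \T(S)^{-1}$ is again unimodular and polynomial; in particular it lies in $\operatorname{Mat}_d(\SF_q)$ with $q \nmid \det(\T(S^{-1}))$. Such a matrix preserves the $q$-order of every column vector (bound both it and its inverse, which are both in $\operatorname{Mat}_d(\SF_q)$), so the column orders of $S[A]_\T$ agree with those of $A S$. It therefore suffices to control the column structure of $A S$ and to track how $\T(S^{-1})$ acts on the leading coefficients.

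Next comes the linear algebra over $K$. Since $\ra(L) = r$, I choose $U \in \operatorname{GL}_d(K)$ whose last $d - r$ columns span $\ker L$ and whose first $r$ columns complete these to a basis of $K^d$; then $L U = (L_1 \mid 0)$ with $L_1$ a $d \times r$ matrix of rank $r$ and the trailing block zero. Rescaling one of the first $r$ columns of $U$ by $(\det U)^{-1} \in K^*$ preserves this block shape and lets me assume $U \in \operatorname{SL}_d(K)$. The lifting of $U$ is the step I expect to be the main obstacle: I need a unimodular $S \in \operatorname{Mat}_d(\F[z])$ with $\pi_q(S) = U$, and a unimodular $S$ forces $\pi_q(\det S) \in \F^*$, so the normalization $\det U = 1$ is exactly what makes the determinant bookkeeping consistent. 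Because $\F[z]$ is Euclidean, $\operatorname{SL}_d(\F[z])$ is generated by elementary matrices and the reduction $\operatorname{SL}_d(\F[z]) \to \operatorname{SL}_d(K)$ is surjective: factor $U$ into transvections over $K$ and lift each by choosing a polynomial representative of its off-diagonal entry, so that the product is the desired $S$.

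Finally I would read off the statement. Expanding $S = U + q S^{(1)} + \cdots$ over $\SF_q$, the coefficient of $q^{-n}$ in $A S$ is $L U = (L_1 \mid 0)$, and since $S \in \operatorname{Mat}_d(\SF_q)$ right multiplication cannot lower the order, so $\ord_q(A S) \geq -n$. Hence the first $r$ columns of $A S$ have order exactly $-n$ with leading coefficient $L_1$, while the last $d - r$ columns have vanishing leading coefficient and order $\geq -(n-1)$. Multiplying on the left by $\T(S^{-1})$ preserves these column orders and sends the leading coefficient $L_1$ to $\pi_q(\T(S^{-1})) L_1$, still of rank $r$. Setting $A_1 := q^{n} \cdot (\text{first } r \text{ columns of } S[A]_\T)$ and $A_2 := q^{n-1} \cdot (\text{last } d - r \text{ columns})$ then yields the form \eqref{eq:smith}, with $A_1, A_2$ having entries in $\SF_q$ of sizes $d\times r$ and $d\times(d-r)$, and $\ra(A_1) = \ra(\pi_q(A_1)) = r$.
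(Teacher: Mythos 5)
Your proof is correct, and its overall skeleton is the same as the paper's: column-reduce the leading matrix $\lc_q(A)$ over the residue field $K=\F[z]/\langle q\rangle$, lift the reducing matrix to a unimodular polynomial matrix $S$, and observe that left multiplication by $\T(S^{-1})$ (unimodular, hence in $\operatorname{GL}_d(\SF_q)$) preserves column orders and the rank of the leading block, so only the right factor $AS$ matters. Where you genuinely diverge is in the construction of the unimodular lift. The paper asserts that the column reduction can be achieved by a matrix of the special shape $Q=C\cdot(\id_d+U)$ with $C$ a non-singular \emph{constant} matrix and $U$ strictly upper triangular over $K$; for such a $Q$, any entrywise polynomial lift has determinant $\det C\in\F^*$, so unimodularity is immediate by inspection, and the factorization is essentially free to compute (take $C$ a permutation moving the pivot rows of a normalized basis of $\ker\lc_q(A)$ to the bottom) --- a relevant point since this lemma sits inside Algorithm~1. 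You instead normalize the reducing matrix into $\operatorname{SL}_d(K)$ --- correctly noting that $\pi_q(\det S)\in\F^*$ is forced for a unimodular lift, and that $r\geq 1$ makes the rescaling possible --- and then lift via surjectivity of $\operatorname{SL}_d(\F[z])\to\operatorname{SL}_d(K)$ by factoring into transvections and lifting each factor. That argument is sound; one small inaccuracy is the appeal to $\F[z]$ being Euclidean, which is not what is used: you only need that $\operatorname{SL}_d(K)$ is generated by elementary matrices (true because $K$ is a field), since you lift the elementary factors directly rather than factoring anything over $\F[z]$. The trade-off: the paper's structured form gives an explicit, cheap $S$ but quietly leaves the existence of the $C\cdot(\id_d+U)$ factorization unproved, whereas your route avoids that factorization entirely at the cost of a transvection decomposition and a less structured~$S$.
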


\begin{proof}
  The leading matrix $\lc_q(A)$ of $A$ at $q$ is a matrix with entries in the
  residue field $\F[\n]/\langle q\rangle$.  There exists a non-singular matrix
  $Q$ with entries in $\F[\n]/\langle q\rangle$ such that $\lc_q(A)\cdot Q$ is
  in a column-reduced form, i.e.\ the last $d-r$ columns of $\lc_q(A)\cdot Q$
  are zero, and $Q$ is of the form $Q=C\cdot(\id_d+U)$, where $C$ is a
  non-singular constant matrix, $I_d$ the identity matrix of dimension
  $d\times d$, and $U$ is a strictly upper triangular matrix. Taking $S=Q$ as a
  matrix in $\operatorname{Mat}_d(\F[\n])$ will result in $S[A]_\T$ as desired.
\end{proof}

\begin{example}[Example~\ref{ex:mainex} continued]
  \label{ex:mainex2}
  If we set $q=\n-2$, then the leading matrix of the system in
  Example~\ref{ex:mainex} at $q$ is
  \[ \lc_q(A)=(qA) \m q = \begin{pmatrix}0 & 0\\
      -6 & 3\end{pmatrix}.\]
  A suitable transformation to bring this matrix into a column-reduced form is
  \[S=\begin{pmatrix}1 & \frac{1}{2} \\ 0 & 1\end{pmatrix}.\]
  Applying $S$ to $A$ gives
  \[S[A]_\T = \begin{pmatrix}\frac{\n+1}{\n-2} & 0 \\ \frac{-2\n-2}{\n-2} &
      2\end{pmatrix}.\]
\end{example}

\begin{lemma}
  \label{lem:shearing}
  Let $A\in\operatorname{GL}_d(\F(z))$ and $q\in \F[z]$ be a $\T$-minimal pole
  of $A$.  Suppose that $A$ is of the form~\eqref{eq:smith} and let
  $r=\ra(\lc_q(A))$.  If $[A]_\T$ is partially desingularizable at $q$ then any
  desingularizing transformation $T$ for $[A]_\T$ can be written as
  \[T = D\cdot \tilde{T},\text{ where}\]
    \[D=\operatorname{diag}(\underbrace{q,\dots,q}_{r \text{
        times}},\underbrace{1,\dots,1}_{d-r \text{ times}})\text{ and
    }\tilde{T}\in\operatorname{GL}_d(\F(\n))\cap\operatorname{Mat}_d(\F[z]).\]
\end{lemma}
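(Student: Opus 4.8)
The plan is to reduce everything to a local computation at $q$ and to use the full column rank of the leading matrix. Write $B := T[A]_\T = \T(T^{-1})\,A\,T$ and $n := -\ord_q(A)$. Since $q$ is $\T$-minimal, partial desingularizability at $q$ means precisely that the given $T$ satisfies $\ord_q(B) > \ord_q(A) = -n$, i.e.\ $\ord_q(B) \ge -(n-1)$. The factor $\T(T^{-1})$ has a pole structure at $q$ that I do not want to track, so the first move is to eliminate it: multiplying $B = \T(T^{-1})\,A\,T$ on the left by $\T(T)$ and using $\T(T)\,\T(T^{-1}) = \T(TT^{-1}) = \id$ gives the clean identity
\[
A\,T = \T(T)\,B .
\]
Now $T\in\operatorname{Mat}_d(\F[\n])$ is polynomial, hence so is $\T(T)$ and $\ord_q(\T(T)) \ge 0$. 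Since the order of a matrix product is at least the sum of the orders of its factors, $\ord_q(A\,T) = \ord_q(\T(T)\,B) \ge \ord_q(\T(T)) + \ord_q(B) \ge -(n-1)$.

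Next I would extract what this bound says about the rows of $T$. Partition the rows of $T$ into the top $r$ rows $T^{\mathrm{top}}$ and the bottom $d-r$ rows $T^{\mathrm{bot}}$. Because $A$ is in the form~\eqref{eq:smith}, the product splits as
\[
A\,T = \frac{1}{q^{n}}\,A_1\,T^{\mathrm{top}} + \frac{1}{q^{n-1}}\,A_2\,T^{\mathrm{bot}} .
\]
Here $A_1,A_2$ have entries in $\SF_q$ and $T^{\mathrm{top}},T^{\mathrm{bot}}$ are polynomial, so the second summand already lies in $q^{-(n-1)}\SF_q$. Multiplying the identity by $q^{n-1}$ and invoking $\ord_q(A\,T)\ge -(n-1)$ forces $q^{-1}A_1 T^{\mathrm{top}} = q^{n-1}A\,T - A_2 T^{\mathrm{bot}} \in \SF_q$, that is $\ord_q(A_1 T^{\mathrm{top}}) \ge 1$, which is exactly $\pi_q(A_1 T^{\mathrm{top}}) = 0$.

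The decisive step is to turn this into divisibility of $T^{\mathrm{top}}$ itself. Applying $\pi_q$ and its multiplicativity yields $\pi_q(A_1)\,\pi_q(T^{\mathrm{top}}) = 0$ over the residue field $\F[\n]/\langle q\rangle$. By construction $\lc_q(A) = (\,\pi_q(A_1)\mid 0\,)$, so $r = \ra(\lc_q(A)) = \ra(\pi_q(A_1))$; thus the $d\times r$ matrix $\pi_q(A_1)$ has full column rank and, over a field, is left-invertible. Hence $\pi_q(T^{\mathrm{top}}) = 0$, i.e.\ $q$ divides every entry of the first $r$ rows of $T$. It follows that $\tilde T := D^{-1}T$ — obtained by dividing the first $r$ rows of $T$ by $q$ and leaving the rest untouched — lies in $\operatorname{Mat}_d(\F[\n])$, and $\det\tilde T = q^{-r}\det T \ne 0$ gives $\tilde T\in\operatorname{GL}_d(\F(\n))\cap\operatorname{Mat}_d(\F[\n])$, so that $T = D\tilde T$ as required.

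I expect the main obstacle to be conceptual rather than computational: recognizing that one should pass from $B = \T(T^{-1})AT$ to $AT = \T(T)B$, which trades the uncontrolled factor $\T(T^{-1})$ for the harmless polynomial matrix $\T(T)$ and thereby makes the order estimate on $AT$ immediate, and then realizing that the full column rank of $\lc_q(A)$ — guaranteed by the normal form of Lemma~\ref{lem:smithnf} — is exactly the input that upgrades ``$A_1 T^{\mathrm{top}}\equiv 0 \bmod q$'' to ``$T^{\mathrm{top}}\equiv 0 \bmod q$''. Apart from this the computation is entirely local at $q$; $\T$-minimality enters only insofar as it is the regime in which partial desingularizability directly delivers the single global input $\ord_q(B) > \ord_q(A)$.
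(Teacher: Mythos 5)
Your proof is correct and follows essentially the same route as the paper's: both rewrite the desingularization condition via the identity $AT=\T(T)B$, use that $\T(T)$ is polynomial to get the order bound at $q$, reduce modulo $q$, and conclude from the full column rank of $\pi_q(A_1)$ that the first $r$ rows of $T$ are divisible by $q$. Your multiplication by $q^{n-1}$ with an explicit block split, versus the paper's reduction of the $q^n$-scaled identity $\T(T)(q^nB)=(q^nA)T$ modulo $q$, is only a cosmetic difference in bookkeeping.
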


\begin{proof}
  Let $n=-\ord_q(A)$. Suppose we are given a desingularizing transformation
  $T \in\operatorname{GL}_d(\F(\n))$ and let $B=T[A]_\T$. Then we have that
  $\T(T)B=AT$ and hence \[\T(T) (q^{n}B )= (q^{n}A )T.\]
  Since $\ord_q(q^{n}B)>0$ and the orders at $q$ of the other matrices involved
  in the equality are non-negative, we get
  \[\pi_q(q^{n}A )\pi_q(T) = 0.\]
  By assumption, the matrix $\pi_q(q^{n}A) $ is of the form
  \[\pi_q(q^{n}A )=\begin{pmatrix} A_1 & 0\end{pmatrix},\]
  where $A_1$ is a $d\times r$ matrix with linearly independent columns. Thus,
  the first $r$ rows of $\pi_q(T)$ must be zero, i.e \ the first $r$ rows of $T$
  have to be divisible by $q$. This yields the claim.
\end{proof}

\begin{remark}
  \label{rem:shearing}
  Note that the determinant of any desingularizing transformation $T$ of $A$ at
  $q$, not necessarily $\T$-minimal, is divisible by $q$; in fact
  $q^r \mid \det(T)$. It then follows that $\T(q)$ divides $\det(\T(T))$; in
  fact $\T(q)^r \mid \det(\T(T))$.
\end{remark}

\begin{lemma} 
  \label{lem:dispersion}
  Let $A\in\operatorname{GL}_d(\F(z))$ and let $q\in \F[z]$ with
  $q\mid\operatorname{den}(A)$ be an irreducible pole.  If $[A]_\T$ is
  (partially) desingularizable at $q$ then there exists a maximal positive
  integer~$\ell$ such that $\T^\ell(q)\mid \operatorname{num}(\det(A))$.
\end{lemma}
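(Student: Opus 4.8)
The plan is to convert the statement about the matrix $A$ into one about the scalar $\det A$ and then to track how its order varies along the $\T$-orbit $\{\T^k(q)\}_{k\in\Z}$ of the pole $q$ under a desingularizing transformation. By hypothesis there is a polynomial transformation $T\in\operatorname{GL}_d(\F(\n))\cap\operatorname{Mat}_d(\F[\n])$ realizing the (partial) desingularization at $q$; set $B:=T[A]_\T$ and $t:=\det T\in\F[\n]\setminus\{0\}$. Since $T[A]_\T=\T(T^{-1})AT$ and $\det(\T(T))=\T(\det T)$, taking determinants gives
\[\det A=\frac{\T(t)}{t}\,\det B.\]
Because $\T$ is an automorphism with $\T(\n)=\n+1$, the $\T^k(q)$ are pairwise non-associate irreducibles, and for every $k$ the identity yields
\[\ord_{\T^k(q)}(\det A)=\ord_{\T^{k-1}(q)}(t)-\ord_{\T^k(q)}(t)+\ord_{\T^k(q)}(\det B).\]

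Next I would single out the rightmost shift of $q$ that divides $t$. As $t$ is a nonzero polynomial, $\{k:\ord_{\T^k(q)}(t)>0\}$ is finite, and by Remark~\ref{rem:shearing} (which gives $q\mid\det T$) it contains $0$; let $k^*\ge 0$ be its maximum and put $\ell:=k^*+1\ge 1$. Evaluating the displayed order relation at $k=\ell$, where $\ord_{\T^\ell(q)}(t)=0$ while $\ord_{\T^{\ell-1}(q)}(t)=\ord_{\T^{k^*}(q)}(t)\ge 1$, produces
\[\ord_{\T^\ell(q)}(\det A)\ge 1+\ord_{\T^\ell(q)}(\det B).\]
Thus, to obtain $\T^\ell(q)\mid\num(\det A)$ it suffices to show that $\det B$ has no pole at $\T^\ell(q)$, i.e.\ $\ord_{\T^\ell(q)}(\det B)\ge 0$.

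This last step is the crux, and the main obstacle. For it I would argue that $\T^\ell(q)$ is not a pole of $A$ and that the desingularization creates none there. When $q$ is $\T$-minimal this is immediate: every $\T^k(q)$ with $k\ge 1$ fails to divide $\den(A)$, so $\ord_{\T^\ell(q)}(A)\ge 0$, and any desingularizing transformation satisfies $\ord_p(B)\ge\ord_p(A)$ for all $p\ne q$; hence $\ord_{\T^\ell(q)}(B)\ge 0$ and a fortiori $\ord_{\T^\ell(q)}(\det B)\ge 0$. For a non-$\T$-minimal $q$ I would use the recursive structure of Definition~\ref{def:desing}: the composite $T$ is built from transformations that successively remove the poles $\T^k(q)$, $k\ge 1$, beginning with the $\T$-minimal (leftmost) one. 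Each such factor contributes its own shift of $q$ to $\det T$ by Remark~\ref{rem:shearing}, so $k^*$ is at least the largest index $k_{\max}$ with $\T^{k_{\max}}(q)\mid\den(A)$; consequently $\ell>k_{\max}$, so $\T^\ell(q)\nmid\den(A)$. Since none of the intermediate removals acts at the non-pole $\T^\ell(q)$, the clause that orders do not decrease away from the pole being removed propagates $\ord_{\T^\ell(q)}(\cdot)\ge 0$ through the entire chain, giving $\ord_{\T^\ell(q)}(\det B)\ge 0$.

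Once nonemptiness is in hand, maximality is automatic: $\num(\det A)$ is a fixed nonzero polynomial with finitely many irreducible factors, and the $\T^\ell(q)$ are pairwise distinct irreducibles, so only finitely many positive $\ell$ satisfy $\T^\ell(q)\mid\num(\det A)$; the exhibited $\ell=k^*+1$ shows this finite set is nonempty, hence it possesses a maximum. I expect the delicate point to be exactly the pole control of $\det B$ along the orbit in the non-$\T$-minimal case, where the bookkeeping of which shifts of $q$ occur in $\det T$ must be tight enough to place the chosen $\ell$ strictly to the right of every pole of $A$ in the congruence class.
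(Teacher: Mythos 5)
Your proof is correct, and for a $\T$-minimal pole it is essentially the paper's own argument: taking determinants in $T[A]_\T=\T(T^{-1})AT$ gives $\det A = \frac{\T(t)}{t}\det B$ with $t=\det T$; your candidate shift $\ell$ (one plus the largest $k$ with $\T^k(q)\mid t$, positive by Remark~\ref{rem:shearing}) is exactly the paper's $\ell_0$ (the largest integer with $\T^{\ell_0}(q)\mid\T(t)$); and $\T$-minimality together with $\ord_p(T[A]_\T)\ge\ord_p(A)$ for $p\ne q$ plays the same role as the paper's use of $\det(\num(A))=\den(A)^d\det(A)$, with maximality following from finiteness of the irreducible factors of $\num(\det A)$ in both cases. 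The only divergence is the non-$\T$-minimal case: you track the composite transformation through the congruence class (which is sound, but requires the bookkeeping you yourself flag as delicate --- decomposing $T$ into a transformation removing all poles $\T^k(q)$, $k\ge 1$, followed by one for the now-minimal $q$, and propagating non-negativity of $\ord_{\T^\ell(q)}$ through the chain), whereas the paper disposes of this case in one line: by Definition~\ref{def:desing}, desingularizability at a non-$\T$-minimal pole $\tilde q$ already entails desingularizability at the congruent $\T$-minimal pole $q=\T^k(\tilde q)$, so the proven case yields $\T^{\ell_0}(q)\mid\num(\det(A))$ and hence $\T^{k+\ell_0}(\tilde q)\mid\num(\det(A))$. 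That reduction buys simplicity and avoids any structural assumptions on the composite; your version buys nothing extra here, though it does no harm.
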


\begin{proof} First, suppose $q$ is $\T$-minimal. There are only finitely many
  factors of $\num(\det(A))$ of positive degree because $A$ is
  non-singular. Thus it suffices to show that there exists a positive integer
  $\ell_0$ such that $\T^{\ell_0}(q)\mid \num(\det(A))$. Let $T$ be a
  desingularizing transformation of $A$ at $q$. Put $B:=T[A]_\T$ and denote
  $\det(T)$ by $t$.  Then, due to the desingularization property, we have that
  \[ \T (T) ^{-1} \num(A) T = \frac{\den(A)}{\den(B)} \num(B) \in
    \operatorname{Mat}_d(\F[\n]).\] Hence
  \[\frac{\det(\num(A))t}{\T(t)} \in \F[\n].\]
  Let $\ell_0$ be the largest integer such that $\T^{\ell_0}(q)\mid \T(t)$. By
  Remark~\ref{rem:shearing}, $\ell_0$ is strictly positive. Since
  $\T^{\ell_0}(q)\nmid t$, it follows that $\T^{\ell_0}(q)\mid
  \det(\num(A))$.
  Now from the relation $\det(\num(A)) = \den(A)^d\det(A)$ and since we assumed
  that $\den(A)$ has no factor of the form $\T^{j}(q)$ with $j \in \N^*$ we can
  conclude that $\T^{\ell_0}(q)\mid \operatorname{num}(\det(A))$.
  \noindent To see that the theorem holds for non-$\T$-minimal poles, let
  $\tilde q$ be a non-$\T$-minimal pole congruent to $q$, i.e.\ there exists a positive integer
  $k$ such that $\T^k(\tilde q)=q$. Then
  $\T^{k+\ell_0}(\tilde q)=\T^{\ell_0}(q) \mid \operatorname{num}(\det(A))$.
\end{proof}

\begin{definition}
  \label{def:dispersion}
  Let $A\in\operatorname{GL}_d(\F(z))$ and $q\in \F[z]$ be an irreducible pole
  of $A$. We define the $\T$-dispersion of $A$ at $q$ as :
  \[\T\textit{-dispersion}(A,q) = \max{ \{\ell \in \N^* \,\text{ s.t. }\T^\ell(q) \mid
      \num(\det(A)) \}}.\]
  When the latter set is empty we put $\T$-dispersion$(A,q) =0$.
\end{definition}

Note that a necessary condition that $[A]_\T$ can be (partially) desingularized
at $q$ is that $\T$-dispersion$(A,q) >0$.

\begin{example}[Example~\ref{ex:mainex2} continued]
  \label{ex:mainex3}
  The determinant of $S[A]_\T$ in Example~\ref{ex:mainex2} is
  $\frac{2(\n+1)}{\n-2}$. Therefore the $\T$-dispersion of $S[A]_\T$ at $q=z-2$
  is equal to 3.
\end{example}




We will now describe an algorithm for desingularizing a given system $[A]_\T$ at
a $\T$-minimal pole $q$. By repeatedly applying the algorithm to $[A]_\T$, it is
then possible to desingularize the system at all removable singularities.
It is sufficient to treat the case where $q$ is a single
and simple pole of~$A$ (i.e.\ $qA$ has polynomial entries). This is stated in
the following lemma.

\begin{lemma}\label{lem:simpole}
  Let $A\in\operatorname{GL}_d(\F(z))$ and let $q\in \F[z]$ be a 
  $\T$-minimal pole of $A$. Set $h=\frac{\den{(A)}}{q}$ so that the
  matrix $hA = q^{-1}\num(A)$ has a single and simple pole at $q$. Then the
  system $[A]_\T$ is (partially) desingularizable at~$q$ if and only if the system
  $[hA]_\T$ is desingularizable at~$q$. More precisely, a polynomial matrix
  $T \in \operatorname{GL}_d(\F(z))$ is a desingularizing transformation for
  $[hA]_\T$ at $q$ if and only if $T$ (partially) desingularizes $[A]_\T$ at~$q$.
\end{lemma}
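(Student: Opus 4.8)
The plan is to exploit the fact that passing from $A$ to $hA$ only multiplies everything by the scalar rational function $h$, and that a basis transformation commutes with such a scalar multiplication. Concretely, for any $T\in\operatorname{GL}_d(\F(z))$ I would first record the identity
\[T[hA]_\T=\T(T^{-1})(hA)T=h\,\T(T^{-1})AT=h\,T[A]_\T,\]
which holds because $h$ is a scalar and therefore commutes with every matrix factor. Thus, writing $B:=T[A]_\T$, the system obtained from $[hA]_\T$ via $T$ is exactly $hB$, and the whole statement reduces to bookkeeping of $q$-adic (and $p$-adic) orders.

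Next I would pin down the relevant orders. Setting $n:=-\ord_q(A)\ge 1$, we have $\ord_q(\den(A))=n$, hence $\ord_q(h)=n-1$; and since $hA=q^{-1}\num(A)$ with $q\nmid\num(A)$, the matrix $hA$ satisfies $\den(hA)=q$ and $\ord_q(hA)=-1$. In particular $q$ is trivially a $\T$-minimal pole of $hA$ (its denominator is just $q$), so part (1) of Definition~\ref{def:desing} applies to $[hA]_\T$; and for a pole of order $-1$ the notions ``partially desingularizable'' and ``desingularizable'' coincide, since the only way to raise the order strictly above $-1$ is to reach order $\ge 0$. This already explains why the statement can promise full desingularizability of $[hA]_\T$.

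The core step is the order comparison via the identity above. At $q$ it gives $\ord_q(T[hA]_\T)=\ord_q(h)+\ord_q(B)=(n-1)+\ord_q(B)$, so
\[\ord_q(T[hA]_\T)\ge 0\iff \ord_q(B)\ge 1-n\iff \ord_q(B)>-n=\ord_q(A),\]
i.e.\ $T$ desingularizes $[hA]_\T$ at $q$ precisely when $T$ raises the order of $[A]_\T$ at $q$. For any other irreducible $p$, the same identity yields $\ord_p(T[hA]_\T)-\ord_p(hA)=\ord_p(B)-\ord_p(A)$, because the common contribution $\ord_p(h)$ cancels; hence the side condition $\ord_p(T[hA]_\T)\ge\ord_p(hA)$ for all $p\ne q$ is equivalent to $\ord_p(B)\ge\ord_p(A)$ for all $p\ne q$. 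Combining the two equivalences shows that a polynomial $T\in\operatorname{GL}_d(\F(z))\cap\operatorname{Mat}_d(\F[z])$ is a desingularizing transformation for $[hA]_\T$ at $q$ if and only if it (partially) desingularizes $[A]_\T$ at $q$, which is the precise claim; the first assertion then follows by quantifying over the existence of such a $T$.

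I do not expect a serious obstacle, as the argument is essentially a scalar factorization followed by additivity of $\ord_q$ under multiplication. The one point requiring care is aligning the ``strictly greater'' clause for $A$ (raising $\ord_q$ by at least one, starting from $-n$) with the ``$\ge 0$'' clause for the simple pole of $hA$; this is exactly the computation $(n-1)+\ord_q(B)\ge 0\Leftrightarrow\ord_q(B)\ge -n+1$, and it is also what forces the conclusion to be desingularizability of $[hA]_\T$ rather than merely partial desingularizability.
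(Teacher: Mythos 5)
Your proof is correct and takes essentially the same route as the paper: both rest on the scalar-commutation identity $T[(hA)]_\T = h\cdot(T[A]_\T)$, which is the entirety of the paper's stated proof. Your explicit bookkeeping of $q$-adic and $p$-adic orders (showing $\ord_q(hB)\geq 0 \iff \ord_q(B)>\ord_q(A)$, and cancellation of $\ord_p(h)$ elsewhere) simply fills in what the paper dismisses as a ``direct consequence.''
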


\begin{proof} 
  It is a direct consequence of the following (trivial but interesting)
  property: for all $T \in \operatorname{GL}_d(\F(z))$ and
  $h \in \F[z]\setminus \{0\}$, one has $T[(hA)]_\T = h\cdot(T[A]_\T)$.
\end{proof}


\begin{remark} 
  With the notation of the above lemma, the $\T$-dispersion of $[hA]_\T$ at $q$
  is greater than or equal to the $\T$-dispersion of $[A]_\T$ at~$q$, and equality
  holds if $q$ is $\T$-minimal. It follows from the fact that
  $\det (hA) =h^d\cdot\det(A)$.
\end{remark}

\begin{lemma}
  \label{lem:reduce}
  Let $A\in\operatorname{GL}_d(\F(z))$. Suppose that $A$ has a single, simple,
  irreducible pole at
  $q$. If $[A]_\T$ is desingularizable at $q$ with $\T$-dispersion $\ell$, then
  there exist a unimodular polynomial matrix~$S$ and a diagonal polynomial
  matrix $D$ such that $(S\cdot D)[A]_\T$ is either desingularized (with respect
  to $q$) or desingularizable at $\T(q)$ with $\T$-dispersion $\ell-1$.
\end{lemma}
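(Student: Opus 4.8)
The plan is to produce $S$ from Lemma~\ref{lem:smithnf} and then clear the pole by an explicit shearing. First I would apply Lemma~\ref{lem:smithnf} with $n=-\ord_q(A)=1$ (the pole is simple) to obtain a unimodular polynomial $S$ with
\[
  B := S[A]_\T = \begin{pmatrix} \tfrac1q A_1 & A_2 \end{pmatrix},
\]
where $A_1,A_2$ have entries in $\SF_q$, $A_1$ is $d\times r$ with $\ra(A_1)=r$, and $r=\ra(\lc_q(A))\geq1$. Since $S$ and $S^{-1}$ are polynomial, $B$ still has $q$ as its only pole. I would then take the shearing $D=\operatorname{diag}(\underbrace{q,\dots,q}_{r},\underbrace{1,\dots,1}_{d-r})$ of Lemma~\ref{lem:shearing} and compute $A' := (SD)[A]_\T = D[B]_\T = \T(D^{-1})\,B\,D$. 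Right-multiplication by $D$ scales the first $r$ columns of $B$ by $q$, so $BD=\begin{pmatrix} A_1 & A_2\end{pmatrix}=:C$ is holomorphic at $q$; as $\den(B)\mid q$ and $D$ is polynomial, $C$ is in fact a polynomial matrix. Left-multiplication by $\T(D^{-1})=\operatorname{diag}(\T(q)^{-1}I_r,I_{d-r})$ then divides the first $r$ rows of $C$ by $\T(q)$. Hence $A'$ has no pole at $q$, and its only possible pole congruent to $q$ is at $\T(q)$.

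This yields the dichotomy. If $A'$ has no pole at $\T(q)$ (which happens exactly when $\T(q)$ divides every entry in the first $r$ rows of $C$), then $A'$ is polynomial throughout the class $q+\N^*$ and is desingularized with respect to $q$. Otherwise $A'$ has a simple, $\T$-minimal pole at $\T(q)$, and I claim it is desingularizable there. For this I would feed the hypothesis through Lemma~\ref{lem:shearing}: let $T$ be a desingularizing transformation for $[A]_\T$ at $q$; then $T_B:=S^{-1}T$ is a polynomial desingularizing transformation for $[B]_\T$ at $q$, so Lemma~\ref{lem:shearing} factors it as $T_B=D\tilde T$ with $\tilde T\in\operatorname{GL}_d(\F(\n))\cap\operatorname{Mat}_d(\F[\n])$. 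The transformation-composition law then gives $\tilde T[A']_\T=(SD\tilde T)[A]_\T=T[A]_\T$, which is desingularized; since $\tilde T$ is polynomial this exhibits $A'$ as desingularizable at $\T(q)$.

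It remains to track the dispersion. Because $S$ is unimodular one has $\det(B)=\det(A)$, whence
\[
  \det(A') = \det(\T(D^{-1}))\,\det(B)\,\det(D) = \frac{q^r}{\T(q)^r}\,\det(A).
\]
Writing $\det(A)=N/M$ with $N=\num(\det(A))$, the $\T$-minimality of $q$ forces $M=\den(\det(A))$ to have no factor $\T^j(q)$ with $j\geq1$, so for every $k\geq2$ the factor $\T^k(q)$ divides $\num(\det(A'))$ if and only if it divides $N$; the extra $q^r/\T(q)^r$ only affects the factors $q$ and $\T(q)$, neither of which contributes to a dispersion counted over $\N^*$ at $\T(q)$. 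Since the $\T$-dispersion of $A$ at $q$ is $\ell$ (the largest $k\in\N^*$ with $\T^k(q)\mid N$), the largest $k\geq2$ with $\T^k(q)\mid N$ is $\ell$, so the largest $m\in\N^*$ with $\T^m(\T(q))=\T^{m+1}(q)\mid\num(\det(A'))$ is $m=\ell-1$; that is, $\T$-dispersion$(A',\T(q))=\ell-1$.

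The main obstacle is the second paragraph: the claim that desingularizability survives the shearing is not a formal consequence of merely removing the pole at $q$, since $SD$ by itself creates a new pole at $\T(q)$ and is therefore not a desingularizing transformation in the strict sense. The argument genuinely relies on the factorization $T_B=D\tilde T$ of Lemma~\ref{lem:shearing} to manufacture the polynomial transformation $\tilde T$ that desingularizes $A'$. I note finally that the boundary case $\ell=1$ is consistent: there the computed dispersion would be $0$, which by the remark following Definition~\ref{def:dispersion} is incompatible with being desingularizable at $\T(q)$, so that case necessarily falls into the first alternative.
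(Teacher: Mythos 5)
Your proof is correct and follows the paper's own route: apply the unimodular $S$ of Lemma~\ref{lem:smithnf} (with $n=1$), then the shearing $D=\operatorname{diag}(q\id_r,\id_{d-r})$ of Lemma~\ref{lem:shearing}, and observe that $(S\cdot D)[A]_\T$ has at worst a single, simple pole at $\T(q)$ with the $\T$-dispersion reduced by one. You are in fact somewhat more thorough than the paper, whose proof stops after exhibiting the block form of $(S\cdot D)[A]_\T$: your factorization $S^{-1}T = D\tilde{T}$ via Lemma~\ref{lem:shearing}, used to transport desingularizability from $q$ to $\T(q)$, and your explicit computation $\det\bigl((S\cdot D)[A]_\T\bigr)=\tfrac{q^r}{\T(q)^r}\det(A)$ for the dispersion count, supply justifications the paper leaves implicit (the former is essentially the argument the paper only spells out later, in the proof of Theorem~\ref{thm:desingularization}).
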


\begin{proof}
  We first take $S$ as in Lemma~\ref{lem:smithnf} so that $S[A]_\T$ has the form
  \[S[A]_\T= 
      \begin{pmatrix} {\frac{\tilde{A}_{1,1}}{q}}&
        {\tilde{A}_{1,2}}\\{\frac{\tilde{A}_{2,1}}{q}} &{\tilde{A}_{2,2}}
      \end{pmatrix} ,\]
  where the $\tilde{A}_{i,j}$ are blocks with polynomial entries, the diagonal
  blocks are of size $r=\ra{(\lc_q(A))}$ and $d-r$ respectively.  Take $D
  =\operatorname{diag}(q\id_r,\id_{d-r})$ as in Lemma~\ref{lem:shearing}.  Then
  the matrix $B := (S\cdot D)[A]_\T$ has the form
  \[B= 
      \begin{pmatrix} {\frac{\tilde{A}_{1,1}}{\T(q)}}&
        {\frac{\tilde{A}_{1,2}}{\T(q)}}\\{\tilde{A}_{2,1}} &{\tilde{A}_{2,2}}
      \end{pmatrix}.\]
  The resulting system $[B]_\T$ has at worst a simple and
  single pole at $\T(q)$ with $\T$-dispersion $\ell-1$.
\end{proof}

\begin{example}[Example~\ref{ex:mainex3} continued]
  \label{ex:mainex4} 
  The rank of the leading matrix in Example~\ref{ex:mainex2} is~1. We apply the
  transformation \[D_1=\begin{pmatrix}\n-2\;\;\; & 0\\ 0 & 1\end{pmatrix},\]
  to $S[A]_\T$ of Example~\ref{ex:mainex3} and arrive at the system
  \[(S\cdot D_1)[A]_\T =\begin{pmatrix}\frac{\n+1}{\n-1} & 0\\ -2\n-2\;\;\; &
      2\end{pmatrix}.\]
  The determinant of $(S\cdot D_1)[A]_\T$ is $\frac{2(\n+1)}{\n-1}$. The new
  $\T$-dispersion is~2.
\end{example}

\begin{theorem}
  \label{thm:desingularization}
  Let $A$ be desingularizable at a single, simple, irreducible pole~$q$. Then
  there exists an integer $n$, unimodular polynomial matrices $S_1,\dots,S_n$
  and diagonal polynomial matrices $D_1,\dots,D_n$ such that
  \[T = S_1\cdot D_1\cdots S_n\cdot D_n,\]
  is a desingularizing transformation for $A$ at $q$. Furthermore, any other
  desingularizing transformation $T'$ for $A$ at $q$ can be written as
  \begin{equation}
    \label{eq:leftfactor}
    T' = T\cdot\tilde{T}\text{ with }\tilde{T}\in\operatorname{GL}_d(\F(\n))\cap\operatorname{Mat}_d(\F[z]).
  \end{equation}
\end{theorem}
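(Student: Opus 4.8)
The proof splits into an \emph{existence} part and a \emph{minimality} (right-multiple) part, and both are driven by induction on the $\T$-dispersion $\ell = \T\textit{-dispersion}(A,q)$ of Definition~\ref{def:dispersion}. Since $A$ is desingularizable at $q$, a positive dispersion is necessary, so $\ell \geq 1$.

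For existence, the plan is simply to iterate Lemma~\ref{lem:reduce}. That lemma produces a unimodular $S_1$ and a diagonal $D_1$ so that $(S_1 D_1)[A]_\T$ is \emph{either} already desingularized at $q$---in which case we stop with $n=1$---\emph{or} has a single, simple, irreducible pole at $\T(q)$ with $\T$-dispersion $\ell-1$. In the latter case the hypotheses of Lemma~\ref{lem:reduce} hold again for the transformed system at the pole $\T(q)$, so we apply it once more to obtain $S_2, D_2$, and so on. The dispersion drops by one at every step while desingularizability forces it to remain positive, hence after at most $\ell$ steps the system is desingularized; collecting the factors gives $T = S_1 D_1 \cdots S_n D_n$ with $n \leq \ell$.

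For the minimality statement I would first record the composition law $T_1[T_2[A]_\T]_\T = (T_2 T_1)[A]_\T$, which follows directly from $T[A]_\T = \T(T^{-1})AT$; transformations thus compose in reverse order. I then argue by induction on $\ell$. Put the system into the normal form~\eqref{eq:smith} using the unimodular $S_1$ of Lemma~\ref{lem:smithnf}; since $S_1$ and $S_1^{-1}$ are polynomial, every desingularizing transformation $T'$ of $[A]_\T$ can be written $T' = S_1 U$ where $U := S_1^{-1}T'$ is a desingularizing transformation of $S_1[A]_\T$ (and $U$ is polynomial precisely when $T'$ is). Because $S_1[A]_\T$ is in the form~\eqref{eq:smith} and $q$ is still $\T$-minimal for it, Lemma~\ref{lem:shearing} applies and forces $U = D_1 \tilde U$ with $D_1 = \operatorname{diag}(q\id_r,\id_{d-r})$ and $\tilde U$ polynomial. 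This is the crucial step: Lemma~\ref{lem:shearing} makes $S_1 D_1$ a common left factor of \emph{every} desingularizing transformation. By the composition law, $\tilde U$ is then a desingularizing transformation of $B := (S_1 D_1)[A]_\T$.

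To close the induction I apply the dichotomy of Lemma~\ref{lem:reduce} to $B$. If $B$ is already desingularized at $q$, then $T = S_1 D_1$ is a desingularizing transformation and $T' = T\tilde U$ with $\tilde U$ polynomial; this is the base case. Otherwise $B$ has a single, simple pole at $\T(q)$ with dispersion $\ell-1$ and is desingularizable there, so the induction hypothesis furnishes a minimal $\hat T = S_2 D_2 \cdots S_n D_n$ for $B$ such that every desingularizing transformation $\tilde U$ of $B$ satisfies $\tilde U = \hat T \tilde V$ with $\tilde V$ polynomial; substituting back gives $T' = S_1 D_1 \hat T \tilde V = T \tilde V$, as claimed. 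The main obstacle is exactly this minimality argument: one must check that the hypotheses of Lemma~\ref{lem:shearing} are re-established at each stage---that the normal form~\eqref{eq:smith} can be reinstated and that the shifted pole $\T(q)$ is again $\T$-minimal for $B$, which holds because the transformations introduce no pole beyond $\T(q)$ and $q$ was $\T$-minimal for $A$---so that a single shearing factor can be peeled off and the induction on the strictly decreasing dispersion carries through.
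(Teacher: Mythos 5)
Your proof is correct and takes essentially the same route as the paper: existence by iterating Lemma~\ref{lem:reduce} while the $\T$-dispersion strictly decreases, and minimality by peeling off the unimodular factor $S_1$ and then the diagonal factor $D_1$ via Lemma~\ref{lem:shearing}. The only difference is presentational: where the paper says ``repeat this reasoning $n$ times,'' you make the iteration precise as an induction on the dispersion, with the composition law $T_1[T_2[A]_\T]_\T = (T_2 T_1)[A]_\T$ stated explicitly.
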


\begin{proof}
  By Lemma~\ref{lem:dispersion}, a desingularizable system $[A]_\T$ has strictly
  positive $\T$-dispersion~$\ell$. Applying the transformation $S\cdot D$ as in
  Lemma~\ref{lem:reduce} gives a system equivalent to~$[A]_\T$ having at
  worst a pole at $\T(q)$ (instead of $q$) but with reduced $\T$-dispersion. After at
  most~$\ell$ such transformations, the resulting matrix $T[A]_\T$ has to be
  desingularized at~$q$. This shows that $T$ can be chosen as in the statement
  of the theorem. To see that any other desingularizing transformation $T'$ of
  $[A]_\T$ at $q$ can be written as in~\eqref{eq:leftfactor}, we first note that
  since $S_1$ is unimodular, for any such $T'$ we have
  \[T'=S_1\cdot\underbrace{(S_1^{-1}\cdot
      T')}_{\mathclap{=:T''\in\operatorname{GL}_d(\F(\n))\cap\operatorname{Mat}_d(\F[z])}},\]
  and therefore we can assume that $A$ is of the form~\eqref{eq:smith}. Then, as
  was shown in Lemma~\ref{lem:shearing}, we can write
  \[T''= D_1\cdot\tilde{T},\]
  with $\tilde{T}\in\operatorname{GL}_d(\F(\n))\cap\operatorname{Mat}_d(\F[z]))$. Again, we can repeat this
  reasoning $n$ times until we arrive at the desired form.
\end{proof}

\begin{example}[Example~\ref{ex:mainex4} continued]
  \label{ex:mainex5}
  The leading matrix of $(S\cdot D_1)[A]_\T$ as in Example~\ref{ex:mainex4} at
  $\T(q)=\n-1$ is already in column-reduced form and of rank 1. We apply the
  transformation $D_2=\operatorname{diag}(\n-1,1)$, and get
  \[(S\cdot D_1\cdot D_2)[A]_\T =\begin{pmatrix}\frac{\n+1}{\n} & 0\\ -2\n^2 +
      2\;\;\; & 2\end{pmatrix}.\]
  Again, the leading matrix of this system at $\T^2(q)=\n$ is column-reduced and
  of rank 1. Finally, after applying the transformation $D_3=\operatorname{diag}(\n,1),$
  we get the desingularized system
  \[(S\cdot D_1\cdot D_2\cdot D_3)[A]_\T =\begin{pmatrix}1 & 0\\ -2\n^3 +
      2\n\;\;\; & 2\end{pmatrix}.\]
  Collecting all the transformations, we see that a desingularizing
  transformation for $A$ at $q=\n-2$ is given by
  \[T = S\cdot D_1\cdot D_2\cdot D_3 = \begin{pmatrix}\n^3 - 3\n^2 + 2\n\;\;\; &
      \frac{1}{2}\\ 0& 1\end{pmatrix}.\]
\end{example}

As was already shown in Lemma~\ref{lem:dispersion}, a positive $\T$-dispersion
is a necessary condition for a removable singularity. For a given system $[A]_\T$
and an irreducible polynomial $q$, the $\T$-dispersion can be obtained by
computing the largest integer root of the resultant
$\operatorname{res}_{\n}(q(\n+k),\operatorname{num}(\det(A)))$. This, together
with Theorem~\ref{thm:desingularization} and its proof gives rise to
Algorithm~1.

\begin{figure}[h]
  \begin{center}
\begin{algo}
  {desingularize\_A($A,q$)}
  {$A$ with entries in $\F(\n)$ and a single, simple,
    irreducible pole $q\in\F[\n]$.}
  {$(T,T[A]_\T)$ s.t.\ $T[A]_\T$ is desingularized at
    $q$, or $(I_{d},A)$ if desingularization is not possible.}
  \item $T \leftarrow I_{d}$
  \item \begin{while}{$\T$-$\operatorname{dispersion}(A,q)>0$ \textbf{AND} $\operatorname{den}(A) = 0\ \m\ q$}
    \item $A_0\leftarrow   \lc_q(A)$
    \item $S \leftarrow$ as in the proof of Lemma~\ref{lem:smithnf}.
    \item $D\;{\leftarrow} \operatorname{diag}(q,\dots,q,1,\dots,1)$ with
      $\operatorname{rank}(A_0)$ many ${}\qquad$ elements equal to $q$.
    \item $A\leftarrow \T(S\cdot D)^{-1}\cdot A\cdot (S\cdot D)$
    \item $T\leftarrow T\cdot S\cdot D$
    \item  $q\leftarrow \T(q)$
    \end{while}
    \item \textbf{IF} ($\operatorname{den}(A) = 0\ \m\ q$) \textbf{RETURN} $(I_{d},A)$
    \item \textbf{ELSE} \textbf{RETURN} ($T, A$)
    \end{algo}
  \end{center}
  \end{figure}


\subsection{Characterization of Desingularizable Poles}

We can give a necessary and sufficient condition for a pole to be
desingularizable. It can be seen as the shift analogue of the nilpotency of the
leading matrix at the considered pole of the system, which is a necessary
condition for an apparent singularity in the differential setting~\cite{bm}.

\begin{proposition} 
  \label{prop:factorial0} 
  Let $q\in \F[z]$ be a $\T$-minimal pole of the system $[A]_\T$.
  Let $\tilde{A} = q^n A$, so that $\ord_q(\tilde{A})=0$ and
  $\pi_q(\tilde{A})=\lc_q(A)$. If $A$ is (partially) desingularizable at $q$
  then there exists a positive integer $k$ such that
  \begin{equation}
    \label{eq:factorial0}
    \pi_q(\tilde{A}\T^{-1}(\tilde{A})\dots \T^{-k}(\tilde{A})) =0.
  \end{equation}
\end{proposition}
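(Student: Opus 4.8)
The plan is to combine a polynomial desingularizing transformation with a telescoping identity for the product of backward shifts. Since $q$ is $\T$-minimal and $[A]_\T$ is (partially) desingularizable at $q$, Definition~\ref{def:desing} provides a polynomial matrix $T\in\operatorname{GL}_d(\F(\n))\cap\operatorname{Mat}_d(\F[\n])$ with $B:=T[A]_\T$ satisfying $\ord_q(B)\ge\ord_q(A)+1=-(n-1)$ and $\ord_p(B)\ge\ord_p(A)$ for every irreducible $p\neq q$. From $T[A]_\T=\T(T)^{-1}AT$ I read off $A=\T(T)\,B\,T^{-1}$, so that $\T^{-j}(A)=\T^{1-j}(T)\,\T^{-j}(B)\,\T^{-j}(T)^{-1}$; taking the product over $j=0,\dots,k$ makes all the inner pairs $\T^{-j}(T)^{-1}\T^{-j}(T)$ cancel and leaves
\[\prod_{j=0}^{k}\T^{-j}(A)=\T(T)\Bigl(\prod_{j=0}^{k}\T^{-j}(B)\Bigr)\T^{-k}(T)^{-1}.\]
Multiplying by the scalar $\prod_{j=0}^{k}\T^{-j}(q)^n$ and using $\T^{-j}(\tilde A)=\T^{-j}(q)^n\,\T^{-j}(A)$ turns this into an expression for $P_k:=\prod_{j=0}^{k}\T^{-j}(\tilde A)$, and the goal becomes showing $\ord_q(P_k)\ge1$ for some $k$, since that forces $\pi_q(P_k)=0$.

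Next I would estimate the order at $q$ of each factor, using $\ord_q(MN)\ge\ord_q(M)+\ord_q(N)$ for matrices and exact additivity for scalar multiples. Because $q$ and $\T^{-j}(q)$ are coprime irreducibles for $j\neq 0$, the scalar factor contributes $\ord_q\bigl(\prod_{j=0}^{k}\T^{-j}(q)^n\bigr)=n$. The matrix $\T(T)$ is polynomial, hence $\ord_q(\T(T))\ge0$. In the middle product $\ord_q(\T^{-j}(B))=\ord_{\T^{j}(q)}(B)$: the $j=0$ term gives $\ord_q(B)\ge-(n-1)$, while for $j\ge1$ the $\T$-minimality of $q$ yields $\ord_{\T^{j}(q)}(A)\ge0$ and therefore $\ord_{\T^{j}(q)}(B)\ge\ord_{\T^{j}(q)}(A)\ge0$. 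Collecting these, $\ord_q\bigl(\prod_{j=0}^{k}\T^{-j}(B)\bigr)\ge-(n-1)$.

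The only term that can ruin the bound is $\T^{-k}(T)^{-1}$, and controlling it is the crux. Writing $\T^{-k}(T)^{-1}=\operatorname{adj}(\T^{-k}(T))/\det(\T^{-k}(T))$, the adjugate is polynomial, so it has order $\ge0$ at $q$, whereas $\det(\T^{-k}(T))=\T^{-k}(\det T)$ has $\ord_q$ equal to $\ord_{\T^{k}(q)}(\det T)$. Since $\det T$ is a fixed nonzero polynomial with only finitely many irreducible factors, $\T^{k}(q)\nmid\det T$ for all but finitely many $k$, and for each such large $k$ one gets $\ord_q(\T^{-k}(T)^{-1})\ge0$. Adding the four estimates yields $\ord_q(P_k)\ge n+0-(n-1)+0=1>0$, hence $\pi_q(P_k)=0$, which is exactly~\eqref{eq:factorial0}. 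The essential obstacle is this last step: by Remark~\ref{rem:shearing} we have $q\mid\det T$, so for small $k$ (e.g.\ $k=0$, where $P_0=\tilde A$ and $\pi_q(\tilde A)=\lc_q(A)\neq0$) the factor $\T^{-k}(T)^{-1}$ genuinely has a pole at $q$; the vanishing only sets in once $k$ is large enough to clear every shift of $q$ dividing $\det T$, and the partial-desingularization bound $\ord_q(B)\ge-(n-1)$ is precisely what compensates the $n$ produced by the scalar factor.
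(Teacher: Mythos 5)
Your proof is correct and follows essentially the same route as the paper's: the same telescoping identity relating the shifted products of $A$ and $B=T[A]_\T$, the same order bounds from $\T$-minimality and the (partial) desingularization property, and the same observation that $\T^k(q)$ avoids the finitely many factors of $\det T$ for $k$ large. The only difference is bookkeeping: the paper moves $\T^{-k}(T)$ to the $A$-side so all factors stay polynomial and concludes via invertibility of $\pi_q(\T^{-k}(T))$ over the residue field, whereas you keep $\T^{-k}(T)^{-1}$ and control its order at $q$ through the adjugate--determinant decomposition.
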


\begin{proof}
  Let $T$ be a desingularizing transformation for $[A]_\T$ at $q$ and
  $B=T[A]_\T$.  Then for all non-negative integers $k$ one has
  \[\T(T) B\T^{-1}(B) \dots \T^{-k}(B) = A\T^{-1}(A) \dots \T^{-k}(A)
    \T^{-k}(T),\]
  and hence
  \[\T(T) (q^nB)\T^{-1}(q^nB) \dots \T^{-k}(q^nB) = \tilde{A}\T^{-1}(\tilde{A})
    \dots \T^{-k}(\tilde{A}) \T^{-k}(T).\]
  As $\ord_q(q^nB) >0$ and
  \[ \ord_q(\T^{-j}(q^nB))= \ord_{\T^{j}(q)}(B) \geq \ord_{\T^{j}(q)}(A) \geq 0,
    \; \hbox{for all} \, j \in \N^*,\] we get that
  \[\pi_q(\tilde{A}\T^{-1}(\tilde{A}) \dots \T^{-k}(\tilde{A}) \T^{-k}(T)) =0.\]
  Now we conclude by remarking that for $k$ large enough $\pi_q(T(\n-k))$ is
  invertible.
\end{proof}

We will now show that the factorial relation~\eqref{eq:factorial0} is a sufficient condition for a matrix $A$ to
be partially desingularizable at $q$.

\begin{proposition} 
  \label{prop:factorial1} 
  Let $q\in \F[z]$ be a $\T$-minimal pole of $[A]_\T$. Let $\tilde{A} = q^n A$, so that $\ord_q(\tilde{A})=0$ and
  $\pi_q(\tilde{A})=\lc_q(A)$.
  If $[A]_\T$ is such that
  the factorial relation~\eqref{eq:factorial0} holds for some integer $k \geq 1$
  then $[A]_\T$ is (partially) desingularizable at $q$.
\end{proposition}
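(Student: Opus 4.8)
The plan is to establish the converse of Proposition~\ref{prop:factorial0} by strong induction on the number $k+1$ of factors in the factorial relation~\eqref{eq:factorial0}, where a single inductive step is one reduction of the type built in Lemma~\ref{lem:reduce}. First I would invoke Lemma~\ref{lem:simpole} to reduce to the case where $q$ is a single simple pole, so that $\tilde{A}=qA$ and $\ord_q(A)=-1$; passing from $A$ to $hA$ preserves~\eqref{eq:factorial0}, since $\widetilde{hA}=h\tilde{A}$ and the scalar $h\,\IT(h)\cdots\T^{-k}(h)$ has non-negative order at $q$ (because $q$ is $\T$-minimal), so it factors out of $\pi_q$. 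Writing $M_j:=\pi_q(\T^{-j}(\tilde{A}))$, which is legitimate because each $\T^{-j}(\tilde{A})$ has non-negative order at $q$ by $\T$-minimality, the hypothesis becomes $M_0M_1\cdots M_k=0$ with $M_0=\lc_q(A)$.

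Next I would perform one reduction step. Taking $S$ as in Lemma~\ref{lem:smithnf} and $D=\operatorname{diag}(q\id_r,\id_{d-r})$ as in Lemma~\ref{lem:shearing}, I set $T_1=S\cdot D$ and $B=T_1[A]_\T$, which has at worst a simple pole at $\T(q)$, and one checks that $\T(q)$ is again a $\T$-minimal pole of $B$. The heart of the argument is to show $B$ satisfies~\eqref{eq:factorial0} at $\T(q)$ with one fewer factor. From $B=\T(T_1^{-1})AT_1$ the products telescope to $\prod_{j=0}^{m}\T^{-j}(B)=\T(T_1^{-1})\bigl(\prod_{j=0}^{m}\T^{-j}(A)\bigr)\T^{-m}(T_1)$, and after forming $\widetilde{B}=\T(q)B$ the diagonal factors and the powers of the shifted $q$ collapse into
\[\prod_{j=0}^{m}\T^{-j}(\widetilde{B}) = \operatorname{diag}(\id_r,\T(q)\id_{d-r})\,\T(S^{-1})\Bigl(\prod_{j=0}^{m}\T^{-j}(\tilde{A})\Bigr)\T^{-m}(S)\,\operatorname{diag}(\id_r,\T^{-m}(q)^{-1}\id_{d-r}).\]

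To evaluate $\pi_{\T(q)}$ of this I would introduce the shift isomorphism $\sigma\colon\F[\n]/\langle q\rangle\to\F[\n]/\langle\T(q)\rangle$ induced by $g(\n)\mapsto g(\n+1)$, which satisfies $\sigma(\pi_q(f))=\pi_{\T(q)}(\T(f))$; hence $\pi_{\T(q)}(\T^{-j}(\tilde{A}))=\sigma(M_{j+1})$ and $\pi_{\T(q)}(\T(S^{-1}))=\sigma(\pi_q(S)^{-1})$. All factors in the display have non-negative order at $\T(q)$ (again by $\T$-minimality), so reducing modulo $\T(q)$ and discarding the invertible right factors turns the product, since $\sigma$ fixes $\operatorname{diag}(\id_r,0)$, into $\sigma\bigl(\operatorname{diag}(\id_r,0)\,\pi_q(S)^{-1}M_1\cdots M_{m+1}\bigr)$. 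As $\sigma$ is injective, this vanishes iff $\operatorname{diag}(\id_r,0)\,\pi_q(S)^{-1}M_1\cdots M_{m+1}=0$, a purely mod-$q$ statement. The defining property of $S$ is that $M_0\,\pi_q(S)$ is column reduced with vanishing last $d-r$ columns, so those columns of $\pi_q(S)$ span $\ker M_0$; consequently $\ker\bigl(\operatorname{diag}(\id_r,0)\pi_q(S)^{-1}\bigr)=\ker M_0$, and the identity is equivalent to $\operatorname{im}(M_1\cdots M_{m+1})\subseteq\ker M_0$, i.e.\ to $M_0M_1\cdots M_{m+1}=0$. Taking $m=k-1$ this is exactly the hypothesis, so $B$ satisfies~\eqref{eq:factorial0} at $\T(q)$ with $k$ factors.

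The induction then closes: a relation with a single factor says $\lc_{\T^i(q)}(B)=0$, forcing the pole to disappear, i.e.\ the system is desingularized there. By the inductive hypothesis $B$ is thus desingularizable at $\T(q)$ via some polynomial $T_2$, and the composite $T_1T_2$ removes the single pole along the entire orbit $q,\T(q),\dots$ without worsening any other prime, so by Lemma~\ref{lem:simpole} it is a (partial) desingularizing transformation for the original $[A]_\T$ at $q$ in the sense of Definition~\ref{def:desing}. I expect the main obstacle to be precisely the key reduction: keeping the bookkeeping consistent between the two residue fields $\F[\n]/\langle q\rangle$ and $\F[\n]/\langle\T(q)\rangle$ bridged by $\sigma$, simultaneously tracking the orders at $\T(q)$ so that $\pi_{\T(q)}$ distributes over the product, and verifying that the projector manufactured by $S$ and $D$ has kernel exactly $\ker M_0$ — the step that aligns the geometry of the transformation with the algebra of the factorial relation.
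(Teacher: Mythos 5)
Your proposal is correct, but it takes a genuinely different route from the paper's proof. The paper keeps the pole at $q$ fixed and moves the \emph{zero} toward it: from the minimal $k$ it extracts $N=\pi_q(\T^{-k}(\tilde A))$, row-reduces $N$ by a unimodular matrix, applies $D=\operatorname{diag}(\T^{k-1}(q)\id_s,\id_{d-s})$, and shows via the same telescoping identity---but always evaluated at the \emph{same} prime $q$---that the factorial relation holds for the new system with $k-1$ in place of $k$; after $k$ such steps the relation degenerates to $\pi_q(\tilde B)=0$, i.e.\ the multiplicity of $q$ has dropped, which gives partial desingularizability directly for any multiplicity $n$. You instead move the \emph{pole} toward the zero (the mechanism of Algorithm~1): after reducing to a single simple pole via Lemma~\ref{lem:simpole}, you apply the $S\cdot D$ step built from Lemmas~\ref{lem:smithnf} and~\ref{lem:shearing} and show the relation is inherited at $\T(q)$ with one factor fewer. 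This forces you to transport mod-$q$ data to mod-$\T(q)$ data through the shift isomorphism $\sigma$ and to identify $\ker\bigl(\operatorname{diag}(\id_r,\zm_{d-r})\pi_q(S)^{-1}\bigr)=\ker(\lc_q(A))$; both verifications, and your telescoping identity with the diagonal factors absorbed, are carried out correctly. What each approach buys: the paper's argument avoids any change of residue field, handles arbitrary multiplicity without a preliminary reduction, and is in essence the correctness proof of Algorithm~2; yours exhibits the factorial relation as exactly the invariant that decreases along the iterations of Algorithm~1, complementing the $\T$-dispersion argument of Theorem~\ref{thm:desingularization}, and delivers the full desingularizing transformation for the simple-pole case within a single induction.

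Two small repairs. First, $\widetilde{hA}\neq h\tilde A$ when $n>1$: with $h=\den(A)/q$ one has $\widetilde{hA}=\num(A)=g\,\tilde A$, where $g=\den(A)/q^{n}$ is the $q$-free part of $\den(A)$; this is harmless, since all your argument needs is that the cofactor is a polynomial scalar, hence of non-negative order at $q$, so that it factors out of $\pi_q$ against the vanishing product. Second, avoid citing Lemma~\ref{lem:reduce} for the claim that $B=(S\cdot D)[A]_\T$ has at worst a simple pole at $\T(q)$: that lemma's stated hypothesis is desingularizability, which is precisely what you are trying to prove; instead, reuse the unconditional block computation from its proof, which is all you need and holds without that hypothesis.
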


\begin{proof}
  Let $k$ be minimal so that~\eqref{eq:factorial0} holds. Put
  \[M:=\pi_q(\tilde{A}\T^{-1}(\tilde{A}) \cdots \T^{-k+1}(\tilde{A})) \quad
    \text{and} \quad N:= \pi_q(\T^{-k}(\tilde{A})).\]
  By definition of $k$, the matrix $M$ is nonzero (but singular) and we have
  $M \cdot N =0.$ With $d:=\operatorname{dim}(A)$ it follows that \[0 < \ra(M) \leq s:=d-\ra{(N)} < d.\]
  Let $P \in \operatorname{GL}_d(\F[z]/\langle q\rangle)$ such that $P \cdot N$
  has its last $(d-s)$ rows linearly independent over $\F[z]/\langle q\rangle$
  while its $s$ first rows are zero.  Consider the matrix $:U= \T^{k-1}(P^{-1})$
  as an element of $\operatorname{Mat}_d(\F[\n])$ then by applying the
  unimodular transformation $Y=UX$, we can assume that the
  matrix $N$ has the following form:
  \[N=\begin{pmatrix} \zm_s&\zm_{s,d-s} \\{N_{2,1}}&N_{2,2}
             \end{pmatrix},\]
  where $N_{2,1}$ and $N_{2,2}$ are matrices with entries in
  $\F[z]/\langle q\rangle$ of size $(d-s) \times s$ and ${(d-s)}\times (d-s)$
  respectively, so that the last $d-s$ rows of $N$ are linearly independent over
  $\F[z]/\langle q\rangle$.  As $M\cdot N =0$ we have that the $d-s$
  last columns of $M$ are zero.  Let
  $\tilde{A} = (\tilde{A}_{i,j})_{1\leq i,j \leq 2}$ be partitioned in
  four blocks as $N$.  Then $\pi_q(\T^{-k}(\tilde{A}_{1,j})) = 0$ for
  $j = 1, 2$.  In other words, the $s$ first rows of $\tilde{A}$ are
  divisible by $\T^{k}(q)$.  Using the substitution $Y=DX$ where
  $D = \operatorname{diag}(\T^{k-1}(q)\id_s, \id_{d-s})$, we get a new
  system which still has a pole at $q$ of multiplicity at most $n$.  Indeed, we
  have
  \begin{align}
    \label{eq:systemb}
    \begin{split}
  B := \T(D)^{-1}AD = {}& q^{-n}
      \begin{pmatrix}
        {\frac{\T^{k-1}(q)\tilde{A}_{1,1}}{\T^{k}(q)}}&
        {\frac{\tilde{A}_{1,2}}{\T^{k}(q)}}\\{\T^{k-1}(q)\tilde{A}_{2,1}}
        &{\tilde{A}_{2,2}}
      \end{pmatrix}  =  \\
    & q^{-n}
      \begin{pmatrix}
        \T^{k-1}(q)\tilde{A'}_{1,1} & \tilde{A'}_{1,2}\\
        \T^{k-1}(q)\tilde{A}_{2,1} & \tilde{A}_{2,2}
      \end{pmatrix},
    \end{split}
  \end{align}
  for some matrices $\tilde{A'}_{1,1}$, $\tilde{A'}_{1,2}$ with entries in
  $\SF_q$. It is clear that $\den(B)\mid \den(A)$ and that $\ord_q(B) \geq
  \ord_q(A)$. Now we will prove that the factorial relation~\eqref{eq:factorial0}
  holds for $\tilde{B}:=q^nB$ with $k-1$ instead of $k$. For this we remark
  first that
  \[\T(D) \tilde{B}\T^{-1}(\tilde{B}) \cdots \T^{-k+1}(\tilde{B}) =
    \tilde{A}\T^{-1}(\tilde{A}) \cdots \T^{-k+1}(\tilde{A}) \T^{-k+1}(D).\]
  It then follows that
  \[\pi_q(\T(D)) \pi_q(\tilde{B}\T^{-1}(\tilde{B}) \cdots \T^{-k+1}(\tilde{B}))=
    M\cdot \pi_q(\T^{-k+1}(D)).\] We have that
  \[ \pi_q(\T^{-k+1}(D)) = \pi_q(\operatorname{diag}(q\id_s, \id_{d-s})) =
    \operatorname{diag}(\zm_s, \id_{d-s}),\]
  hence $ M\cdot \pi_q(\T^{-k+1}(D)) =0$ (since the $d-s$ last columns of $M$
  are zero).  Now
  $\pi_q(\T(D)) = \pi_q(\operatorname{diag}(\T^k(q)\id_s, \id_{d-s}))$ is
  invertible (since $q$ and $\T^k(q)$ are co-prime), it then follows that
  \[\pi_q(\tilde{B}\T^{-1}(\tilde{B}) \cdots \T^{-k+1}(\tilde{B}))=0.\]
  If $k-1$ is still positive then we can repeat this process for the matrix $B$
  and the polynomial~$q$ until we arrive at $k=1$.  When $k=1$ the above
  factorial relation reduces to $\pi_q(\tilde{B}) =0$ which means that
  $\ord_q(\tilde{B}) >0$ and therefore $ \ord_q(B) \geq -n+1$.
\end{proof}


This proof motivates the following alternative desingularization algorithm. In
contrast to Algorithm~1, instead of shifting a singularity towards a zero of the
system, it performs the analogous task of moving a zero towards the singularity
until they cancel each other

\begin{figure}[h]
  \begin{center}
\begin{algo}
  {desingularize\_B($A,q$)}
  {$A$ with entries in $\F(\n)$ and a single, simple, irreducible pole $q\in\F[\n]$.}
  {$(T,T[A]_\T)$ s.t.\ $T[A]_\T$ is desingularized at $q$.}
\item $T \leftarrow I_{d}$
\item \begin{while}{$\operatorname{den}(A) = 0\ \m\ q$}
  \item $\ell \leftarrow \T-\operatorname{dispersion}(A,q)$
  \item \textbf{IF} ($\ell \leq 0$) \textbf{THEN RETURN} ($T, A$)
  \item $n\leftarrow \ord_q(A)$ ; $\tilde{A} \leftarrow q^nA$
  \item $k \leftarrow 0$ ; $M \leftarrow I_{d}$ ;
    $N \leftarrow \pi_q(\tilde{A})$
  \item \begin{while}{$M\cdot N \neq 0$ \textbf{AND} $k \leq \ell$}
    \item $M \leftarrow M \cdot N$; $k \leftarrow k+1$ ;
      $N \leftarrow \pi_q(\T^{-k}(\tilde{A}))$
    \end{while}
  \item $U \leftarrow$ as in the proof of Proposition~\ref{prop:factorial0}.
  \item $D\leftarrow \operatorname{diag}(\T^{k-1}(q)\id_s,\id_{d-s})$ with
    $s =d-\operatorname{rank}(N)$.
  \item $A\leftarrow \T (U\cdot D)^{-1}\cdot A\cdot (U\cdot D)$
  \item $T\leftarrow T\cdot U\cdot D$
  \end{while}
\item \textbf{RETURN} ($T, A$)
\end{algo}
\end{center}
\end{figure}

An implementation of Algorithm~1 and Algorithm~2 in the computer algebra system Sage~\cite{sage} can be
obtained from
\begin{center}
\url{http://www.mjaroschek.com/systemdesing/}
\end{center}
\begin{remark}
 All systems that are desingularizable via Algorithm~1 are also desingularizable
 via Algorithm~2 and vice versa.
\end{remark}

\begin{example}
  For $A$ as in Example~\ref{ex:mainex} and $q=z-2$  we have
    \begin{alignat*}2
     & \tilde{A} = (\n-2)A =\begin{pmatrix}0 & \n-2 \\
        -2(\n +1) &3(\n - 1)\end{pmatrix},\\
  & M = \pi_q(\tilde{A}(\n)\tilde{A}(\n-1)\tilde{A}(\n-2))=
      \tilde{A}(2)\tilde{A}(1)\tilde{A}(0)=\smash[t]{\begin{pmatrix}0 & 0,\\
      -12 & 6\end{pmatrix}}\neq 0\\
  & N = \pi_q(\T^{-3}(\tilde{A})= \tilde{A}(-1)= \smash[t]{\begin{pmatrix}0 & -3 \\
      0 & -6\end{pmatrix}}, \\
  & \pi_q(\tilde{A}(\n)\tilde{A}(\n-1)\tilde{A}(\n-2)\tilde{A}(\n-3)) =
      \tilde{A}(2)\tilde{A}(1)\tilde{A}(0)\tilde{A}(-1)=0,
  \end{alignat*}
  so $k=3$.
  If we chose 
  \[ U= \smash{\begin{pmatrix}\frac12 & \frac12 \\
      0&1\end{pmatrix}},\] 
  then
  \[ \T(U)^{-1}AU=  U^{-1}AU= \frac{1}{(\n-2)}\begin{pmatrix}(\n+1) & 0 \\
      -(\n+1) &2(\n-2)\end{pmatrix}.\]
  We have $s=1$, so with
  $D = \operatorname{diag}(\T^2(q),1)= \operatorname{diag}(\n,1)$ we
  get
  \[B=\T(D)^{-1}(\T(U)^{-1}AU)D = \frac{1}{(\n-2)}\begin{pmatrix}\n & 0 \\
      -\n(\n+1) &2(\n-2)\end{pmatrix}.\]
  Note that, as expected, we have
  that \[\tilde{B}(2)\tilde{B}(1)\tilde{B}(0)=0.\]
  Here we can repeat the above process on $B$ to desingularize as much as
  possible the matrix $A$ at $q=z-2$.  In this particular example $q$ is removable by the transformation
  $T=U\cdot \operatorname{diag}(\n(\n-1)(\n-2),1)$. Indeed, one can see that
  \[ T[A]_\T=\T (T)^{-1} A T = \begin{pmatrix}1 & 0 \\
       -\n(\n^2-1) & 2\end{pmatrix},\]
has polynomial entries. The transformation $T$ is the same as in
Example~\ref{ex:mainex5} up to a right factor $\operatorname{diag}(\frac12,1)$.
\end{example}

\subsection{Rank Reduction}
Consider a system $[A]_{\T}$ and let $q$ be a $\T-$minimal factor of
$\den{(A)}$ with multiplicity $n\geq 1$, such that $[A]_{\T}$ the is not
partially desingularizable at $q$. This implies that there's no positive integer $k$ such
that relation~\eqref{eq:factorial0} holds.  As the quantity $n$ cannot be
reduced, it's natural to ask if it is possible to reduce the rank of the leading
matrix $\lc_q(A)$ by applying a polynomial transformation $T$ to $[A]_{\T}$. We
shall give a criterion for the existence of a polynomial transformation $T$ such
$ord_q(T[A]_\T) =ord_q(A)$ and $\ra(\lc_q(T[A]_\T)) < \ra(\lc_q(A))$
\begin{proposition} 
  \label{prop:factorial2}
  Let $q\in \F[z]$ be a $\T$-minimal pole of $[A]_\T$.  Let
  $\tilde{A}(z) = q^n A(z)$, so that $\ord_q(\tilde{A})=0$ and
  $\pi_q(\tilde{A})=\lc_q(A)$. 
  Then a necessary and sufficient condition for the existence of a polynomial
  transformation $T$ such that $ord_q(T[A]) =ord_q(A)$ and
  $\ra(\lc_q(T[A])) < \ra(\lc_q(A))$ is that there exists a positive integer $k$
  such that
  \begin{equation}
    \label{eq:factorial2}
     \ra(\pi_q(\tilde{A}\T^{-1}(\tilde{A})\dots \T^{-k}(\tilde{A}))) < \ra(\lc_q(A)).
  \end{equation}
\end{proposition}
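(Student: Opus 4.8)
The plan is to prove the equivalence by adapting the arguments of Propositions~\ref{prop:factorial0} and~\ref{prop:factorial1}, replacing their dichotomy ``the $q$-adic product vanishes'' by ``the rank of the $q$-adic product drops below $r:=\ra(\lc_q(A))$''. Throughout I write $\Pi_k:=\pi_q(\tilde A\,\T^{-1}(\tilde A)\cdots\T^{-k}(\tilde A))$, so that $\Pi_0=\lc_q(A)$ has rank $r$ and, since $\Pi_j=\Pi_{j-1}\,\pi_q(\T^{-j}(\tilde A))$, the sequence $\ra(\Pi_0)\ge\ra(\Pi_1)\ge\cdots$ is non-increasing. Condition~\eqref{eq:factorial2} then just says that this sequence eventually drops strictly below $r$, and the least index where it does so will drive the construction.

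For necessity, suppose $T$ is a polynomial transformation with $B:=T[A]_\T$ satisfying $\ord_q(B)=\ord_q(A)=-n$ and $\ra(\lc_q(B))<r$; exactly as for desingularizing transformations in Proposition~\ref{prop:factorial0} I may assume $T$ does not lower the order at any irreducible $p\ne q$, so that $\T$-minimality gives $\ord_q(\T^{-j}(\tilde B))=\ord_{\T^j(q)}(B)\ge 0$ for all $j\ge 0$. From the telescoping identity of Proposition~\ref{prop:factorial0},
\[\T(T)\,\tilde B\,\T^{-1}(\tilde B)\cdots\T^{-k}(\tilde B)=\tilde A\,\T^{-1}(\tilde A)\cdots\T^{-k}(\tilde A)\,\T^{-k}(T),\]
I apply $\pi_q$ factorwise. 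The right side becomes $\Pi_k\,\pi_q(\T^{-k}(T))$; the left side contains the factor $\pi_q(\tilde B)=\lc_q(B)$, so its rank is at most $\ra(\lc_q(B))<r$. Choosing $k$ so large that $\T^k(q)\nmid\det T$ (possible since $\det T\ne 0$ has only finitely many irreducible factors), the matrix $\pi_q(\T^{-k}(T))$ is invertible, whence $\ra(\Pi_k)<r$, i.e.\ \eqref{eq:factorial2}.

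For sufficiency, let $k\ge 1$ be minimal with $\ra(\Pi_k)<r$; by monotonicity $\ra(\Pi_{k-1})=r$. Put $M:=\Pi_{k-1}$ and $N:=\pi_q(\T^{-k}(\tilde A))$, so $\ra(MN)=\ra(\Pi_k)<r$. As in Proposition~\ref{prop:factorial1}, after a unimodular $U$ (chosen so the relevant reductions of its shifts are invertible, hence preserving $\ra(MN)$) I may assume $N$ has its first $s:=d-\ra(N)$ rows zero and its last $d-s$ rows linearly independent; applying $D=\operatorname{diag}(\T^{k-1}(q)\id_s,\id_{d-s})$ produces $B$ with $\tilde B$ of the shape~\eqref{eq:systemb}, and the computation of Proposition~\ref{prop:factorial1} gives
\[\pi_q(\T(D))\,\pi_q\bigl(\tilde B\,\T^{-1}(\tilde B)\cdots\T^{-k+1}(\tilde B)\bigr)=M\cdot\operatorname{diag}(\zm_s,\id_{d-s}).\]
Now, instead of the right side vanishing, I observe it is $M$ with its first $s$ columns deleted; since the last $d-s$ rows of $N$ are independent, its rank equals $\ra(MN)=\ra(\Pi_k)<r$. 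As $\pi_q(\T(D))$ is invertible ($q$ and $\T^k(q)$ being coprime), the length-$(k-1)$ $q$-adic product of $B$ again has rank $<r$. Thus one step lowers the least index $j$ with $\ra(\Pi_j)<r$ by at least one while keeping $\ord_q=-n$; iterating, that index reaches $1$, where the displayed identity reads $\pi_q(\T(D))\,\lc_q(B)=M\cdot\operatorname{diag}(\zm_s,\id_{d-s})$, yielding $\ra(\lc_q(B))<r$.

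The main obstacle is the rank bookkeeping in the sufficiency step, and it sits in two places. First, one must fix the lift of the residue-field matrix $P$ so that all flanking factors $\pi_q(\T^{i}(U))$ occurring in the identities are invertible, so that the $q$-adic product ranks are genuinely conjugation-invariant; this is where a careful choice of lift and the $\T$-minimality of $q$ are used. Second, one must check that each step keeps $\ord_q(B)=-n$ rather than raising it: this holds precisely because the product on the right of the displayed identity has \emph{positive} rank, which is exactly the regime $\ra(\Pi_j)\ge 1$ guaranteed when $q$ is not (partially) desingularizable — the setting motivating the proposition. When instead the rank can only collapse all the way to $0$, that single step increases $\ord_q$ and one recovers desingularization, consistently with Propositions~\ref{prop:factorial0} and~\ref{prop:factorial1}.
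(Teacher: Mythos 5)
Your proof is correct and follows essentially the same route as the paper's: the same telescoping identity with $\pi_q$ applied factorwise and an invertible $\pi_q(\T^{-k}(T))$ for necessity, and for sufficiency the same reduction of $N$ to the form with first $s$ rows zero followed by $D=\operatorname{diag}(\T^{k-1}(q)\id_s,\id_{d-s})$, with the key observation that right-multiplying $M$ by a full-row-rank block preserves $\ra(MN)$, iterated down to $k=1$. Your explicit handling of the two delicate points (assuming $T$ does not lower orders at primes $\T^j(q)$, and the edge case where the rank collapses to $0$ and the order at $q$ rises) makes implicit steps of the paper's argument visible rather than departing from it.
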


\begin{proof}
  {\noindent {\it Necessary condition}}: Suppose first that there exists a
  polynomial matrix $T$ with the desired properties and
  let $B=T[A]_\T$. Similarly to the proof of
  Proposition~\ref{prop:factorial0}, one gets for all non-negative integers $k$:
  \[\T(T) (q^nB)\T^{-1}(q^nB) \dots \T^{-k}(q^nB) = \tilde{A}\T^{-1}(\tilde{A})
    \dots \T^{-k}(\tilde{A}) \T^{-k}(T).\]
  Since $\ord_q(q^nB)=0=\ord_q(\tilde{A})$ and all the other factors in both
  sides of this equality have non-negative orders at $q$ we get that
  \begin{align*}
    & \pi_q(\T(T)) \pi_q((q^nB))\pi_q(\T^{-1}(q^nB) \dots \T^{-k}(q^nB)) = \\  & \qquad\qquad\qquad\qquad\pi_q(\tilde{A}\T^{-1}(\tilde{A}) \dots \T^{-k}(\tilde{A}))\pi_q( \T^{-k}(T)).
  \end{align*}

   By using the fact that the rank of a product of matrices is less or equal to
   the rank of each factor we get that the rank of the product in the right hand
   side of the previous equality is bounded by
   $\ra(\pi_q((q^nB)))= \ra(\lc_q(B))$ and hence
   \begin{align*}
& \ra(\pi_q(\tilde{A}\T^{-1}(\tilde{A}) \dots \T^{-k}(\tilde{A}))\pi_q(
                    \T^{-k}(T)) ) \leq {}\\
&\qquad\qquad\qquad\qquad\qquad\ra(\lc_q(B))< \ra(\lc_q(A)).
   \end{align*}
 Now let $k$ be the smallest positive integer such that the matrix
 $\pi_q(T(\n-k))$ is of full rank. Then
 \begin{align*}
 &\ra(\pi_q(\tilde{A}\T^{-1}(\tilde{A}) \dots \T^{-k}(\tilde{A})))=\\
 & \qquad\ra(\pi_q(\tilde{A}\T^{-1}(\tilde{A}) \dots \T^{-k}(\tilde{A}))\pi_q(
   \T^{-k}(T)) ) < \ra(\lc_q(A)).
 \end{align*}

 {\noindent {\it Sufficient condition}}: Let $r=\ra(\lc(A))$ and let $k$ be
 minimal so that~\eqref{eq:factorial2} holds. Put
 \[M:=\pi_q(\tilde{A}\T^{-1}(\tilde{A}) \cdots \T^{-k+1}(\tilde{A})) \quad
   \text{and} \quad N:= \pi_q(\T^{-k}(\tilde{A})).\]
 By definition of $k$, the matrix $M$ is nonzero, has the same rank $r$ as
 $\lc_q(A)$ and we have the strict inequality
 \[ \ra(M \cdot N) < r = \ra(M).\]
 This implies in particular that $\ra{(N)} < d=\dim(A)$. Let $s:=d-\ra{(N)}$. As in
 the proof of Proposition~\ref{prop:factorial1}, we can assume that $N$ has the following form:
  \[N=\begin{pmatrix} \zm_s&\zm_{s,d-s} \\{N_{2,1}}&N_{2,2}
             \end{pmatrix},\]
  where $N_{2,1}$ and $N_{2,2}$ are matrices with entries in
  $\F[z]/\langle q\rangle$ of size $(d-s) \times s$ and ${(d-s)}\times (d-s)$
  respectively, so that the last $d-s$ rows of $N$ are linearly independent over
  $\F[z]/\langle q\rangle$.  Let
  $M= (M_{i,j})_{1\leq i,j \leq 2}$ be partitioned in
  four blocks as $N$. Then we have
  \[ M\cdot N = \begin{pmatrix} {M_{1,2}}\\M_{2,2}
             \end{pmatrix} \cdot \begin{pmatrix} N_{2,1} & N_{2,2}
             \end{pmatrix}.\]
   As the matrix $ ({N_{2,1}} \;\; N_{2,2})$ is of full rank, we get that
    \[\ra{\begin{pmatrix} {M_{1,2}}\\M_{2,2}
             \end{pmatrix}} = \ra{(M\cdot N)} < r.\]
 Let $\tilde{A} = (\tilde{A}_{i,j})_{1\leq i,j \leq 2}$ be partitioned in
four blocks as $N$.  Then $\pi_q(\T^{-k}(\tilde{A}_{1,j})) = 0$ for $j = 1, 2$.
Using the substitution $Y=DX$ where $ D =
\operatorname{diag}(\T^{k-1}(q)\id_s, \id_{d-s})$, we get a system
$[B]_\T$ of the form~\eqref{eq:systemb}.
  with $\den(B)\mid \den(A)$ and $\ord_q(B) \geq
  \ord_q(A)$.  Note that 
  \begin{align*}
    \pi_q(q^nB)= {}&
    \begin{pmatrix}
            \pi_q(\frac{1}{\T^{k}(q)})\id_s&\zm_{s,d-s}
            \\{\zm_{d-s,s}}&\id_{d-s} \end{pmatrix} \cdot
    \begin{pmatrix}
             \pi_q (\tilde{A}_{1,1} )&  \pi_q(\tilde{A}_{1,2})\\
             \pi_q(\tilde{A}_{2,1})  & \pi_q(\tilde{A}_{2,2})
           \end{pmatrix} \cdot\\
    & \begin{pmatrix}
            \pi_q({\T^{k-1}(q)})\id_s & \zm_{s,d-s} \\
            {\zm_{d-s,s}} & \id_{d-s}
          \end{pmatrix}\raisebox{-0.2em}{.}
  \end{align*}
      It follows that if $k \geq 2$, then $\ra(\pi_q(q^nB)) =\ra(\lc_q{(A)})$, but we will
      prove that the factorial relation~\eqref{eq:factorial2} holds for
      $\tilde{B}:=q^nB$ with $k-1$ instead of $k$. As in the proof of
      Proposition~\ref{prop:factorial1}, we have that
  \[ \pi_q(\T^{-k+1}(D)) = \pi_q(\operatorname{diag}(q\id_s, \id_{d-s})) =
    \operatorname{diag}(\zm_s, \id_{d-s}),\]
  hence 
  \[ M\cdot \pi_q(\T^{-k+1}(D)) = \begin{pmatrix} \zm_s&{M_{1,2}}\\\zm_{d-s}&M_{2,2}
             \end{pmatrix},\]
   whose rank is less than $r$.  Now
  $\pi_q(\T(D)) = \operatorname{diag}(\pi_q(\T^k(q))\id_s, \id_{d-s})$ is
  invertible (since $q$ and $\T^k(q)$ are co-prime), it then follows that
  \begin{align*}
  & \ra(\pi_q(\tilde{B}\T^{-1}(\tilde{B}) \cdots \T^{-k+1}(\tilde{B}))) = \\
& \qquad\qquad\ra{(M\cdot \pi_q(\T^{-k+1}(D))}< r = \ra{(\lc_q{(B)})}.
  \end{align*}

  If $k-1$ is still positive then we can repeat this process on the matrix $B$
  and the polynomial~$q$ until we arrive at $k=1$.  Then we have that
  \[ \pi_q(q^nB)= \begin{pmatrix}
        \pi_q(\frac{1}{\T(q)})\id_s&\zm_{s,d-s}
        \\{\zm_{d-s,s}}&\id_{d-s} \end{pmatrix}\cdot\begin{pmatrix}
        \zm_s&{M_{1,2}}\\\zm_{d-s}&M_{2,2}
             \end{pmatrix},\]
  whose rank is less than $r$.
\end{proof}

The proof of Proposition~\ref{prop:factorial2} suggests that Algorithm~2 can be
easily adapted to minimize the rank of the leading matrix of a $\T$-minimal
pole. In particular, a $T$ can be computed such that
$\ord_p(T[A]_\T)\geq \ord_p(A)$ for $p\in\F[\n]$. It is to note that rank reduction for a pole in
$A$ via Algorithm~2 comes at the potential cost of an increase in order of a
pole of $A^*$, as the next example shows.

\begin{example}
\label{ex:orderincrease}
Consider the system with 
\[A=\begin{pmatrix}\n(\n+1) & 0 & 0\\
0 & \frac{\n+1}{\n} & 0 \\
0 & 0 & \frac{1}{\n}\end{pmatrix},\qquad A^*=
\begin{pmatrix}
\frac{1}{\n(\n-1)} & 0&0\\
0 &  \frac{\n-1}{\n} & 0 \\
0 & 0 & \n-1
\end{pmatrix}.\]
\noindent We have $\rank(\lc_\n(A)) =2$ and $\ord_{z-1}(A^*)=-1$, and computing a rank reducing
transformation for $[A]_\T$ via Algorithm~2 gives $T=\operatorname{diag}(\n,\n,1)$,
which results in
\[T[A]_\T=\begin{pmatrix}\n^2 &0&0\\
0 & 1 & 0\\
0 & 0 & \frac{1}{\n}\end{pmatrix},\qquad T[A]_\T^*=
\begin{pmatrix}
\frac{1}{(\n-1)^2} & 0 & 0\\
0 &  1 & 1\\
0 & 0 & \n-1
\end{pmatrix},\]
with $\rank(\lc_\n(T[A]_\T))= 1$ and $\ord_{z-1}(A^*)=-2$. We note that we merely
shifted an already present pole in $A^*$ to the right, as opposed to adding a new 
factor to the system.
\end{example}

\section{Apparent Singularities}
\label{sec:appsing}

In this section we establish the connection between the analytical notion of
apparent and removable singularities of meromorphic solutions and the algebraic
concept desingularization of difference systems. The key observation is the fact
that the factorial relation~\eqref{eq:factorial0} provides a sufficient
condition for a singularity to be removable..

\begin{proposition}
  \label{prop:factorial}
  Let $\p\in\mathcal{P}_r(A)$ be a pole of $A$ of order $\nu\geq 1$ such that
  $\p-j \notin P_r (A)$ for all positive integers $j$. Let
  $\tilde{A} = (\n -\p)^\nu A$, so that $\tilde{A}(\p) \neq 0$. If $\p$ is a
  removable r-singularity of $[A]_\T$, then there exists a positive integer $k$
  such that
  \begin{equation*}
    \tilde{A}(\p) A(\p-1) \cdots A(\p-k) = 0.
  \end{equation*}
  In particular, the matrix $A(\p -j)$ is singular for some non-negative integer
  $j$.
\end{proposition}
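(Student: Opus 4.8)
The plan is to combine the analytic continuation formula
\[ F(\n) = A(\n-1)A(\n-2)\cdots A(\n-n)F(\n-n) \]
recalled in Section~\ref{sec:prelim} with the removability hypothesis. First I would fix, using~\cite{praagman}, a fundamental matrix $F$ of meromorphic solutions that is holomorphic in some left half-plane $\re\n<\lambda$. Its columns are solutions holomorphic in that half-plane, so by the definition of a removable r-singularity each of them, and hence $F$ itself, extends holomorphically to $\p+1$. Observe also that the hypothesis $\p-j\notin\mathcal P_r(A)$ for all positive integers $j$ is exactly the statement that the minimal polynomial of $\p$ is a $\T$-minimal pole, so that $A(\p-j)$ is finite for every $j\ge 1$.

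The core of the argument is a local analysis near $\p+1$. Setting $\n=\p+1+w$ and using the continuation formula with $k+1$ steps gives
\[ F(\p+1+w) = A(\p+w)\,A(\p-1+w)\cdots A(\p-k+w)\,F(\p-k+w), \]
valid for $0<|w|$ sufficiently small. The decisive point is that, among all the factors, only $A(\p+w)$ is singular at $w=0$: for $1\le j\le k$ the value $A(\p-j+w)$ stays finite as $w\to0$ since $\p-j\notin\mathcal P_r(A)$, and $F(\p-k+w)$ stays finite once $k$ is large enough that $\p-k$ lies in the half-plane of holomorphy of $F$. Writing $A(\p+w)=\tilde A(\p+w)/w^\nu$, multiplying through by $w^\nu$, and letting $w\to0$, the left-hand side tends to $0$ (as $F$ is bounded at $\p+1$) while the right-hand side is continuous there; this yields
\[ \tilde A(\p)\,A(\p-1)\cdots A(\p-k)\,F(\p-k)=0. \]

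It remains to cancel the trailing factor $F(\p-k)$, and this is where I expect the main difficulty. In the algebraic counterpart, Proposition~\ref{prop:factorial0}, one uses that the desingularizing transformation is rational, so its determinant has finitely many zeros and it is invertible at $\n-k$ for large $k$; here $F$ is transcendental and $\det F$ may vanish along the whole orbit $\p-\N$. I would handle this by noting that in a far enough left half-plane $A$ and $A^{-1}$ are holomorphic, so the relation $F(\n)=A(\n-1)F(\n-1)$ makes invertibility of $F$ propagate along $\p-\N$; thus $\det F$ vanishes at all, or at none, of the points $\p-k$ with $k$ large, and in the first case the order of vanishing is eventually constant. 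One may then replace $F$ by $FP$ for a suitable invertible $1$-periodic meromorphic matrix $P$ whose determinant has poles of that order on $\p+\Z$, chosen so that $FP$ stays holomorphic in the half-plane; for this new fundamental matrix $F(\p-k)$ is invertible for some large $k$. Multiplying the last displayed relation on the right by $F(\p-k)^{-1}$ then gives $\tilde A(\p)A(\p-1)\cdots A(\p-k)=0$. The final assertion follows at once: a product of square matrices can vanish only if one of its factors is singular, and since $\tilde A(\p)$ is the leading matrix of $A$ at $\p$, this means that either $\tilde A(\p)$ (the $j=0$ factor) or some $A(\p-j)$ with $1\le j\le k$ is singular.
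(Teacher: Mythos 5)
Your limit computation is correct and is essentially the paper's own: multiply the continuation formula by $(\n-\p)^\nu$, use holomorphy of $F$ at $\p+1$ (removability) and at $\p-k$, and holomorphy of $A$ at $\p-1,\dots,\p-k$, to get
\[
\tilde A(\p)\,A(\p-1)\cdots A(\p-k)\,F(\p-k)=0 .
\]
The divergence---and the gap---lies in how the trailing factor $F(\p-k)$ is removed. The paper never faces this problem: it invokes a result of Ramis asserting that for $-\re\eta$ large enough there is a fundamental matrix, holomorphic in a far left half-plane, with $F(\eta)=\id_d$; taking $\eta=\p-k$ makes the trailing factor the identity. You instead start from an arbitrary fundamental matrix and propose to repair its possible degeneracy along $\p-\N$ by a $1$-periodic gauge $P$. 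But the existence of a $P$ that is simultaneously periodic, invertible, with $\det P$ having poles of order $m$ on $\p+\Z$, \emph{and} with $FP$ still holomorphic in the half-plane is exactly what needs proof, and you assert it rather than prove it. The pole-order condition on $\det P$ is scalar information, while holomorphy of $FP$ is a directional condition: the poles of $P$ must sit in the directions in which $F$ degenerates, and the fact that $\det F$ vanishes to order $m$ tells you nothing about which columns or subspaces those are.

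The gap is fillable, but it takes a genuine argument. In a half-plane where $A$, $A^{-1}$ and $F$ are all holomorphic, the relation $F(\n+1)=A(\n)F(\n)$ gives more than your determinant dichotomy: it shows that $\ker F(\p-k)$ is one and the same subspace $K\subseteq\MC^d$ at every lattice point $\p-k$ there. Choosing a constant invertible $C$ whose last $\dim K$ columns span $K$, those columns of $FC$ vanish at every point of $\p+\Z$ in the half-plane, so dividing them by the periodic scalar $1-e^{2\pi i(\n-\p)}$ (whose zeros are simple and are exactly $\p+\Z$) produces a new fundamental matrix, still holomorphic there, whose determinant vanishes to strictly smaller order at the lattice points; induction on that order yields your $P$ as a product of such elementary periodic gauges. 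Alternatively---and this is the cleaner route---quote the Ramis-type statement the paper uses. Note that you need an input of that strength anyway: the Praagman citation with which you open only guarantees a \emph{meromorphic} fundamental matrix, not one holomorphic in some left half-plane, so even your starting point silently relies on the stronger existence theorem that would also render the whole repair step unnecessary.
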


\begin{proof}
  Using a result due to Ramis~\cite{ramis,immink,barkatou}, one can easily prove
  that for any complex number $\eta$ with $-\operatorname{Re} \eta$ large enough,
  there exist a meromorphic fundamental matrix solution $F(\n)$ which is
  holomorphic for $-\operatorname{Re} z$ large enough and satisfies
  $F(\eta) = \id_d$.  Choose a positive integer $k$ such that
  $-\operatorname{Re}(\p -k)$ is large enough and take a fundamental matrix
  solution $F(\n)$ as above with $F(\p - k) = \id_d$. Then one can write
  \[F(\n + 1) = A(\n)A(\n - 1)A(\n - 2)\cdots A(\n - k)F(\n - k),\]
  and hence 
  \[(\n - \p)^\nu F (\n + 1) = (\n -\p)^\nu A(\n)A(\n - 1)A(\n - 2)\cdots A(\n -
    k)F (\n - k).\] Taking the limit as $\n$ goes to $\p$, we get that
  \[0 = \tilde{A}(\p)A(\p - 1)A(\p - 2)\cdots A(\p -k).\qedhere\]
\end{proof}

\begin{corollary}
  Let $\p \in P_r(A)$ such that there is a $\T$-minimal $q$ with $q(\p)=0$.  If $\p$ is a removable
  singularity of $[A]_\T$, then $[A]_\T$ is desingularizable at $q$.
\end{corollary}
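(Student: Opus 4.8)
The plan is to combine Proposition~\ref{prop:factorial} with Proposition~\ref{prop:factorial1} to pass from the analytic removability of $\p$ to the algebraic desingularizability of $[A]_\T$ at $q$. The essential point is that $q$, being $\T$-minimal and irreducible with $q(\p)=0$, is (when $\F=\MC$, or after passing to a suitable extension) the minimal polynomial of~$\p$ over $\F$, so that the value $\pi_q(f)$ of a rational function at $q$ is governed by the values $f(\p')$ at the conjugate roots $\p'$ of~$q$. I would therefore read the hypothesis ``$\p$ is a removable singularity'' through Proposition~\ref{prop:factorial}, whose conclusion is precisely a scalar-evaluation version of the factorial relation~\eqref{eq:factorial0}: there is a positive integer $k$ with $\tilde A(\p)A(\p-1)\cdots A(\p-k)=0$, where $\tilde A=(\n-\p)^\nu A$.

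First I would argue that this scalar relation at the single root $\p$ upgrades to the residue-field relation~\eqref{eq:factorial0} at~$q$. Write $\tilde A_q = q^n A$ with $n=-\ord_q(A)$, so $\pi_q(\tilde A_q)=\lc_q(A)$. The matrix product $\tilde A_q\,\T^{-1}(\tilde A_q)\cdots\T^{-k}(\tilde A_q)$ has entries in $\SF_q$ (each factor does, since $q$ is $\T$-minimal and hence the shifted denominators $\T^{-j}(q)$ are coprime to $q$), and its value $\pi_q$ of that product is, up to the nonzero scalar factors coming from the $(\n-\p)^\nu$ versus $q^n$ normalization, exactly the matrix obtained by evaluating at $\p$. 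Concretely, over $\MC$ one has $q=\n-\p$, so $\tilde A_q=\tilde A$ and $\pi_q(f)=f(\p)$; the product in~\eqref{eq:factorial0} then evaluates to $\tilde A(\p)A(\p-1)\cdots A(\p-k)$, which vanishes by Proposition~\ref{prop:factorial}. Thus the factorial relation~\eqref{eq:factorial0} holds for some $k\ge 1$.

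Having established~\eqref{eq:factorial0}, I would invoke Proposition~\ref{prop:factorial1} directly: it states that if the factorial relation holds for some $k\ge 1$ at a $\T$-minimal pole~$q$, then $[A]_\T$ is (partially) desingularizable at~$q$. That is the desired conclusion. Since the corollary asserts only desingularizability (not full removal of the pole in one step), partial desingularizability combined with the inductive structure of Definition~\ref{def:desing} suffices.

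The main obstacle I anticipate is the transition in the first step from the single scalar evaluation at $\p$ to the genuine residue-field statement~\eqref{eq:factorial0} when $\F\subsetneq\MC$ and $q$ has degree greater than one. In that case $q$ has several conjugate roots, $\p$ being just one of them, and Proposition~\ref{prop:factorial}'s analytic argument a priori produces the vanishing only at the chosen root~$\p$; one must check that removability holds simultaneously at all conjugates (equivalently, that $\pi_q$ of the product vanishes, not merely one of its coordinate evaluations). I would handle this either by noting that the field of definition is immaterial for the analytic continuation property—so $\p$ removable forces every conjugate $\p'$ to be removable by Galois symmetry, giving vanishing at each root and hence $\pi_q(\cdot)=0$—or, more cleanly, by working over $\MC$ from the outset so that $q$ is linear and $\pi_q$ is literal evaluation at~$\p$. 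Everything else is a direct chaining of the two cited propositions.
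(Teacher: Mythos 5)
Your overall route is the paper's own: feed removability into Proposition~\ref{prop:factorial}, convert its conclusion into the factorial relation~\eqref{eq:factorial0}, and then invoke Proposition~\ref{prop:factorial1}. The genuine gap is in your last step. Proposition~\ref{prop:factorial1} yields only \emph{partial} desingularizability, i.e.\ a polynomial $T$ with $\ord_q(T[A]_\T)>\ord_q(A)$; its proof ends with $\ord_q(B)\geq -n+1$, where $n=-\ord_q(A)$. The corollary asserts desingularizability, which for a $\T$-minimal pole means, by part~(1) of Definition~\ref{def:desing}, that some $T$ achieves $\ord_q(T[A]_\T)\geq 0$ --- so ``desingularizable'' here \emph{is} full removal of the pole, not a weaker notion. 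Your appeal to ``the inductive structure of Definition~\ref{def:desing}'' cannot bridge this: that inductive clause is part~(2) of the definition and applies only to poles that are \emph{not} $\T$-minimal; nothing in the definition promotes partial desingularizability at a $\T$-minimal pole to full desingularizability. As written, your argument proves the corollary only when $n=1$. The missing ingredient is iteration, and each round needs the analytic hypothesis again: with $B=T[A]_\T$ after one application of Proposition~\ref{prop:factorial1}, one checks (i) that $\den(B)\mid\den(A)$, so if $q$ still divides $\den(B)$ it is still a $\T$-minimal pole of $B$; and (ii) that $\p$ is still a removable singularity of $[B]_\T$ --- solutions of $[B]_\T$ are $T^{-1}$ times solutions of $[A]_\T$, and $\det(T)$ is a unit times a product of factors $\T^{j}(q)$ with $j\geq 0$, none of which vanishes at $\p+1$, since an irreducible polynomial in characteristic zero cannot have two roots differing by a nonzero integer. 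Then Propositions~\ref{prop:factorial} and~\ref{prop:factorial1} apply to $B$ again, and after at most $n$ rounds the pole at $q$ is gone. This iteration is exactly the content of the paper's (very terse) proof.

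On the passage from $\tilde A(\p)A(\p-1)\cdots A(\p-k)=0$ to the residue-field relation~\eqref{eq:factorial0}: you are right that this step needs justification (the paper elides it silently), but both of your proposed fixes are shakier than necessary. Galois symmetry of removability is not automatic, because automorphisms of $\MC$ over $\F$ other than complex conjugation are not continuous and need not preserve analytic continuation; and working over $\MC$ produces a desingularizing transformation with entries in $\MC[\n]$, which you would still have to descend to $\F$. The clean argument needs no conjugates at all: the matrix $q^nA\cdot\T^{-1}(q^nA)\cdots\T^{-k}(q^nA)$ has entries in $\F(\n)\cap\SF_q$ (the shifted factors lie in $\SF_q$ by $\T$-minimality), and its value at the point $\p$ equals the vanishing product of Proposition~\ref{prop:factorial} up to the invertible scalars $q(\p-j)^n$ and the ratio between $(\n-\p)^n$ and $q^n$ (legitimate because the roots of the irreducible $q$ are simple, so $n=\nu$). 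Now if $f=a/b\in\SF_q$ with $a,b\in\F[\n]$, $q\nmid b$, and $f(\p)=0$, then $a(\p)=0$, hence $q\mid a$ because $q$ is the minimal polynomial of $\p$ over $\F$, and so $\pi_q(f)=0$. Thus vanishing at the single root $\p$ already gives~\eqref{eq:factorial0}; removability of the conjugate roots never enters.
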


\begin{proof}
  Let $n:= -\ord_q(A)$. We can apply Proposition~\ref{prop:factorial1} to reduce
  the multiplicity of $q$ in $\den(A)$ from $n$ to $n-1$. If $n>1$, $q$ is still
  $\T$-minimal and $\p$ still removable, and we can repeat the process until
  $[A]_\T$ is desingularized at $q$.
\end{proof}

\section{Conclusion and Future Work}
\label{sec:conclusion}
\balance In this paper we presented two algorithms to desingularize linear first
order difference systems and we explored the notions of apparent and removable
singularities. These topics have already been studied in the context of
difference operators, where usually the solution space of a given operator is
increased as a side effect of the desingularization process. An interesting
starting point for further research is to investigate the relation of
desingularization on a system level and on an operator level in regard to this
extension of the solution space.

Concerning pseudo linear systems, we will continue our work in several
directions. We aim to establish a clear connection between removable
singularities of a system $[A]_\T$ and the removable singularities of
$[A^*]_\IT$, as well as the role of removable singularities for extending
numerical sequences. Furthermore, studying desingularization at non-$\T$-minimal
poles is a promising approach for identifying poles that only appear in some
components of fundamental solutions. We are currently also investigating how to
characterize poles in solutions that do not propagate to infinitely many
congruent points via gauge transformations.

Regarding complexity, it would be desirable to conduct a thorough complexity
analysis of desingularization algorithms. Finally, as was shown
in~\cite{chen:curve,improv}, removable singularities of operators can negatively
impact the running time of some algorithms and it is interesting to investigate
whether similar effects occur for linear systems.
\bibliographystyle{plain}
\bibliography{main}
\end{document}